\def\change#1{{#1}}
\newcommand{\DelayClin}{{\sf DelayC_{lin}}}
\newcommand{\DelayLin}{{\sf DelayLin}}
\newcommand{\encode}[1]{||#1||}
\newcommand{\qDecide}[2]{\text{\sc Decide}_{#2}\langle#1\rangle}
\newcommand{\pEnum}[1]{\text{\sc Enum}\langle#1\rangle}
\newcommand{\qEnum}[2]{\text{\sc Enum}_{#2}\langle#1\rangle}
\newcommand{\idom}{{\it dom}}
\newcommand{\arity}{\operatorname{arity}}
\newcommand{\var}{{\it var}}
\newcommand{\free}{\operatorname{free}}
\newcommand{\atoms}{\operatorname{atoms}}
\newcommand{\dltplus}{\Delta_{Q^+}}
\newcommand{\tetra}[1]{\text{\sc Tetra}{\normalfont (}#1{\normalfont )}}
\newcommand{\tetpm}[1]{{\sf Tet_{pm}}{\normalfont (}#1{\normalfont )}}
\newcommand{\rel}[1]{#1_{\operatorname{base}}}
\newcommand{\dis}[1]{\operatorname{dis}(#1)}
\newcommand{\SJF}{\text{\sc{SJF}}}
\newcommand{\calC}{{\mathcal C}}
\newcommand{\calD}{{\mathcal D}}
\newcommand{\calG}{{\mathcal G}}
\newcommand{\calH}{{\mathcal H}}
\newcommand{\calO}{{\mathcal O}}
\newcommand{\calR}{{\mathcal R}}
\newcommand{\calS}{{\mathcal S}}
\newcommand{\convexpath}[2]{
	[   
	create hullnodes/.code={
		\global\edef\namelist{#1}
		\foreach [count=\counter] \nodename in \namelist {
			\global\edef\numberofnodes{\counter}
			\node at (\nodename) [draw=none,name=hullnode\counter] {};
		}
		\node at (hullnode\numberofnodes) [name=hullnode0,draw=none] {};
		\pgfmathtruncatemacro\lastnumber{\numberofnodes+1}
		\node at (hullnode1) [name=hullnode\lastnumber,draw=none] {};
	},
	create hullnodes
	]
	($(hullnode1)!#2!-90:(hullnode0)$)
	\foreach [
	evaluate=\currentnode as \previousnode using \currentnode-1,
	evaluate=\currentnode as \nextnode using \currentnode+1
	] \currentnode in {1,...,\numberofnodes} {
		-- ($(hullnode\currentnode)!#2!-90:(hullnode\previousnode)$)
		let \p1 = ($(hullnode\currentnode)!#2!-90:(hullnode\previousnode) - (hullnode\currentnode)$),
		\n1 = {atan2(\y1,\x1)},
		\p2 = ($(hullnode\currentnode)!#2!90:(hullnode\nextnode) - (hullnode\currentnode)$),
		\n2 = {atan2(\y2,\x2)},
		\n{delta} = {-Mod(\n1-\n2,360)}
		in 
		{arc [start angle=\n1, delta angle=\n{delta}, radius=#2]}
	}
	-- cycle
}
\begin{document}

\title{Enumeration Complexity of Conjunctive Queries with Functional Dependencies\thanks{This
		work was supported
		by the German-Israeli Foundation for Scientific Research and Development (GIF), Grant no.~I-2436-407.6/2016
		and
		by the Austrian Science Fund 
		(FWF): W1255-N23, P25207-N23, P25518-N23.}
}

\titlerunning{Enumeration Complexity of CQs with FDs}

\author{Nofar Carmeli         \and
        Markus Kr\"oll
}

\authorrunning{N. Carmeli, M. Kr\"oll}

\institute{N. Carmeli \at
              Technion, Haifa, Israel, \\
              \email{snofca@cs.technion.ac.il}
           \and
           M. Kr\"oll \at
              TU Wien, Vienna, Austria,\\
              \email{kroell@dbai.tuwien.ac.at}
}

\date{}

\maketitle

\begin{abstract}
We study the complexity of enumerating the answers of Conjunctive Queries (CQs) in the presence of Functional Dependencies (FDs). Our focus is on the ability to list output tuples with a constant delay in between, following a linear-time preprocessing. A known dichotomy classifies the acyclic self-join free CQs into those that admit such enumeration, and those that do not. However, this classification no longer holds in the common case where the database exhibits dependencies among attributes. That is, some queries that are classified as hard are in fact tractable if dependencies are accounted for. We establish a generalization of the dichotomy to accommodate FDs; hence, our classification determines which combination of a CQ and a set of FDs admits constant-delay enumeration with a linear-time preprocessing.

In addition, we generalize a hardness result for cyclic CQs to accommodate unary FDs, and further conclusions of our development include a dichotomy for enumeration with linear delay. Finally, we show that all our results apply also for CQs with disequalities and in the presence of cardinality dependencies that generalize FDs.
\keywords{Enumeration \and Complexity \and Conjunctive queries \and Functional dependencies}
\end{abstract}

\section{Introduction}
\label{intro}
When evaluating a non-boolean Conjunctive Query (CQ) over a database, the number of results can be huge.
Since this number may be larger than the size of the database itself, we need to use specific measures of 
enumeration complexity  to describe the hardness of such a problem. In this perspective, the best we 
can hope for is to constantly output results, in such a way that the delay between them is unaffected by the size
of the database instance. For this to be possible, we need to allow a precomputation phase, as linear time preprocessing is necessary to read the input instance before deciding the existence of a first result.

A known dichotomy determines when the answers to self-join free acyclic CQs can be enumerated with constant delay after linear time preprocessing~\cite{bdg:dichotomy}. This class of enumeration problems, denoted by $\DelayClin$, can be regarded as the most efficient class of nontrivial enumeration problems 
and therefore current work on query enumeration has focused on this class \cite{Durand:2014:EAF:2594538.2594539,DBLP:conf/icdt/SegoufinV17,DBLP:conf/pods/BerkholzKS17}.
Bagan et al.~\cite{bdg:dichotomy} show that a subclass of acyclic queries, called \emph{free-connex},
are exactly those that are enumerable in $\DelayClin$, under the common assumption that boolean
matrix multiplication cannot be solved in quadratic time. 
An acyclic query is called free-connex if the query remains acyclic when treating the head of the query as 
an additional atom.
This and all other results in this paper hold under the RAM model~\cite{DBLP:journals/amai/Grandjean96}.

The above mentioned dichotomy only holds when applied to databases with no additional assumptions, but oftentimes this is not the case. In practice, there is usually a connection between different attributes, and \emph{Functional Dependencies} (FDs) and \emph{Cardinality Dependencies} (CDs) are widely used to model situations where some attributes imply others. As the following example shows, these constraints
also have an immediate effect on the complexity of enumerating answers for queries over such a schema.
\begin{example}\label{example:fd-dif1}
	For a list of actors and the production companies they work with, consider the query
\begin{equation*}
	 Q(actor,production)\leftarrow \text{Cast}(movie,actor), \text{Release}(movie,production).
\end{equation*}	
	At first glance, it appears as though  this query is not in $\DelayClin$, as it is acyclic but not free-connex. Nevertheless, if we take the fact that a movie has only one production company into account,
	we have the FD $\text{Release}:1\rightarrow 2$, and the enumeration problem becomes easy: 
	we only need to iterate over all tuples of $\text{Cast}$ and replace the $movie$ value with the single $production$ value that the relation $\text{Release}$ assigns to it. This can be done in linear time by first
	building a lookup table from $\text{Release}$ in linear time. 
	\qed
\end{example}

Example~\ref{example:fd-dif1} shows that
the dichotomy by Bagan et al.~\cite{bdg:dichotomy} does not hold in the presence of FDs. 
In fact, we believe that dependencies between attributes are so common in real life that 
ignoring them in such dichotomies can lead to missing a significant portion of the tractable cases.
Therefore, to get a realistic picture of the enumeration complexity of CQs, we have to take dependencies into account.
The goal of this work is to generalize the dichotomy to fully accommodate FDs.

Towards this goal,
we introduce an extension of a query $Q$ according to the FDs. The extension is called the FD-extended query, and denoted $Q^+$.
In this extension, each atom, as well as the head of the query, contains all variables that can be implied by its variables according to the FDs.
This way, instead of classifying every combination of CQ and FDs directly, we encode the dependencies within the extended query, and use the classification of $Q^+$ to gain insight regarding $Q$. 
This approach draws inspiration from the proof of a dichotomy in the complexity of \emph{deletion propagation}, in the presence of FDs~\cite{Kimelfeld:2012:DCD}.
However, the problem and consequently the proof techniques are fundamentally different.

The FD-extension is defined in such a way that if $Q$ is satisfied by an assignment, 
then the same assignment also satisfies the extension $Q^+$, as the underlying instance is bound
by the FDs. 
In fact, we can show that enumerating the solutions of $Q$ under FDs can be reduced to enumerating the solutions of $Q^+$. Therefore, tractability of $Q^+$ ensures that $Q$ can be efficiently solved as well. By using the positive result in the known dichotomy, $Q^+$ is tractable w.r.t enumeration if it is free-connex. 
Moreover, it can be shown that the structural restrictions of acyclicity and free-connex are closed
under FD-extensions. Hence, the class of
all queries $Q$ such that $Q^+$ is free-connex is a proper extension of the class of free-connex queries.
We denote the classes of queries $Q$ such that $Q^+$ is acyclic or free-connex as FD-acyclic or FD-free-connex, respectively.

To reach a dichotomy, we now need to answer the following question: Is it possible that $Q$ can be enumerated efficiently even if it is not FD-free-connex?
To show that an enumeration problem is not within a given class, enumeration complexity has few tools to offer. One such tool is a notion of completeness
for enumeration problems~\cite{DBLP:conf/lata/CreignouKPSV17}. However, this notion focuses on problems with a complexity corresponding to higher classes of the polynomial hierarchy. So in order to deal with this problem, Bagan et al.~\cite{bdg:dichotomy} reduced the matrix multiplication problem to enumerating the answers to any query that is acyclic but not free-connex.
This reduction fails, however, when dependencies are imposed on the data, as the constructed database instance does not necessarily satisfy the underlying dependencies.

As it turns out, however, the structure of the FD-extended query $Q^+$ allows us to extend this reduction
to our setting.
By carefully expanding the reduced instance such that on the one hand, the dependencies
hold and on the other hand, the reduction can still be performed within linear time, we establish a dichotomy.
That is, we show that the tractability of enumerating the answers of a self-join free 
query $Q$ in the presence of FDs is exactly characterized by the structure of $Q^+$: Given an FD-acyclic query $Q$,
we can enumerate the answers to $Q$ within the class $\DelayClin$ iff $Q$ is FD-free-connex.

The resulting extended dichotomy, as well as the original one, brings insight to the case of acyclic queries.
Concerning unrestricted CQs, providing even a first solution of a query in linear time is impossible in general.
This is due to the fact that the parameterized complexity of answering boolean CQs,
taking the query size as the parameter,
is $\mathsf{W}[1]$-hard \cite{DBLP:journals/jcss/PapadimitriouY99}. 
This does not imply, however,
that there are no cyclic queries with the corresponding enumeration problems in $\DelayClin$. 
The fact that no such queries exist requires an additional proof, which was presented by Brault-Baron~\cite{bb:thesis}. This result holds under the assumption that a generalized version of the triangle finding problem is not
solvable within linear time \cite{DBLP:conf/focs/WilliamsW10}.
As before, this proof does no longer apply in the presence of FDs. Moreover, it is possible for $Q$ to be cyclic and $Q^+$ acyclic. In fact, $Q^+$ may even be free-connex, and therefore tractable in $\DelayClin$.
We show that, under the same assumptions used by Brault-Baron~\cite{bb:thesis}, the evaluation problem for a 
self-join free CQ in the presence of unary FDs where $Q^+$ is cyclic cannot be solved in linear time.
As linear time preprocessing is not enough to achieve the first result,
a consequence is that enumeration within $\DelayClin$ is impossible in that case.
This covers all types of self-join free CQs and shows a full dichotomy for the case of unary FDs. Moreover, we show how our results can be easily used to yield additional results, such as a dichotomy for the evaluation of CQs with linear delay.

The results we present here are not limited to CQs and FDs.
CQs with disequalities are an extension of CQs, allowing to restrict the satisfying assignments and demand that some pairs of variables map to different values. We prove that our results apply to the more general query class: CQs with disequalities.

Another way of generalizing our results is not to extend the class of queries, but the class of dependencies.
Cardinality Dependencies (CDs) \cite{DBLP:journals/pvldb/CaoFWY14,DBLP:conf/icalp/ArapinisFG16}
are a generalization of FDs, denoted $(R_i:A\rightarrow B,c)$. Here, the right-hand side does not have to be unique for every assignment to the left-hand side, but there can be at most $c$ different values to the variables of $B$ for every value of the variables of $A$. FDs are in fact a special case of CDs where $c=1$. Constraints of that form appear naturally in many applications. For example: a movie has only a handful of directors, there are at most 200 countries, and a person is typically
limited to at most 5000 friends in (some) social networks.
We show that all results described in this paper also apply to CDs.

\paragraph{Contributions.}
Our main contributions are as follows.
\begin{itemize}
	\item We extend the class of queries that are known to be in $\DelayClin$ by incorporating FDs.
	This extension is the class of FD-free-connex CQs.
	\item We establish a dichotomy for the enumeration complexity of self-join free
	FD-acyclic CQs. Consequently, we get a dichotomy for self-join free acyclic CQs in the presence of FDs.
	\item We show a lower bound for FD-cyclic CQs. In particular, we get a dichotomy for all self-join free CQs in the presence of unary FDs with respect to $\DelayClin$, and a similar dichotomy regarding enumeration with linear delay.
	\item We show the extension of all our results to CDs and CQs with disequalities.
\end{itemize}

A short version of this article appeared in the 21st International Conference on Database Theory (ICDT18)~\cite{icdtVersion}. This article has the following additions.
First, it includes all proofs omitted from the short version. Second, it includes a new section, Section~\ref{sec:disequalities}, proving that all results in this paper hold also for CQs with disequalities. In addition, we add more thorough discussions both on the computational model and on the complexity assumption used in Section~\ref{sec:cyclic}.

This work is organized as follows: in Section~\ref{sec:Preliminaries}
we provide definitions and state previous results that we will use.
Section~\ref{sec:extending} introduces the notion of FD-extended queries
and establishes the equivalence between a query and its FD-extension.
The dichotomy for acyclic CQs is shown in Section~\ref{sec:Positive}, while Section~\ref{sec:cyclic} shows a lower bound for cyclic queries under unary FDs.
Section~\ref{section:cardinalitydep} shows that all results from the previous sections extend to CDs, and Section~\ref{sec:disequalities} does the same for CQs with disequalities. Concluding remarks are given in Section~\ref{section:conclusion}.

\section{Preliminaries} \label{sec:Preliminaries}
In this section we provide preliminary definitions as well as state results that we will use throughout this paper.

\subsubsection*{Schemas and Functional Dependencies}
A {\em schema} $\calS$ is a pair $(\calR,\Delta)$ where $\calR$ is a finite set $\{R_1,\ldots,R_n\}$ of {\em relational symbols} and $\Delta$ is a set of {\em Functional Dependencies} (FDs).
We denote the {\em arity} of a relational symbol $R_i$
as $\arity(R_i)$. An FD $\delta\in\Delta$ has the form
$R_i\colon A\rightarrow B$, where $R_i\in\calR$ and $A,B$ are non-empty with $A,B\subseteq\{1,\ldots,\arity(R_i)\}$.

Let $\idom$ be a finite set of constants. A database $I$ over schema $\calS$ is called an 
{\em instance} of $\calS$, and it consists of a finite relation $R_i^I\subseteq\idom^{\arity(R_i)}$ for every relational symbol $R_i\in\calR$, such that all FDs in $\Delta$ are {\em satisfied}.
An FD $\delta=R_i\colon A\rightarrow B$ is said to be satisfied if, for all tuples $u,v\in R_i^{I}$ that are equal on the indices of $A$, $u$ and $v$ are equal on the indices of $B$. 
In what follows, we will often assume w.l.o.g. that all FDs are of the form
$R_i\colon A\rightarrow b$, where $b\in\{1,\ldots,\arity(R_i)\}$, as we can replace an FD of the form $R_i\colon A\rightarrow B$ where $|B|>1$ by the set of FDs $\{R_i\colon A\rightarrow b\mid b\in{B}\}$.
If $|A|=1$, we say that $\delta$ is a {\em unary} FD.

\subsubsection*{Conjunctive Queries} 
Let $\var$ be a set of variables disjoint from $\idom$.
A {\em Conjunctive Query} (CQ) over schema $\calS=(\calR,\Delta)$ is an expression of the form 
$Q(\vec{p}) \leftarrow R_1(\vec{v}_1), \dots, R_m(\vec{v}_m)$,
 where $R_1,\ldots,R_m$ are relational symbols of $\calR$, the tuples $\vec{p}, \vec{v}_1,\ldots, \vec{v}_m$ hold variables, and every variable in $\vec{p}$ appears in at least one of $\vec{v}_1,\ldots, \vec{v}_m$. 
We often denote this query as
$Q(\vec{p})$ or even $Q$. Define the variables of $Q$ as
$\var(Q)=\bigcup_{i=1}^{m} \vec{v}_i$,
and define the {\em free variables}
of $Q$ as $\free(Q)=\vec{p}$.
We call $Q(\vec{p})$ the head of $Q$, and the atomic formulas $R_i(\vec{v}_i)$ are called {\em atoms}.
We further use $\atoms(Q)$ to denote the set of atoms of Q.
A CQ is said to contain {\em self-joins} if some relation symbol appears in more than one atom.

For the {\em evaluation} $Q(I)$ of a CQ $Q$ with free variables $\vec{p}$ over a database $I$, 
we define $Q(I)$ to be the set of all \emph{mappings}
$\mu|_{\vec{p}}$ such that $\mu$ is a homomorphism from 
$R_1(\vec{v}_1), \dots , R_m(\vec{v}_m)$ into $I$, where $\mu|_{\vec{p}}$
denotes the restriction (or projection) of $\mu$ to the variables
$\vec{p}$.
The problem $\qDecide{Q}{\Delta}$ is,
given a database instance $I$ over a schema $(\calR,\Delta)$,
determining whether such a mapping exists.

Let $R(\vec{v})$ be an atom of a CQ. We say that a tuple $t\in R^I$ assigns a variable $x$ with the value $c$ if for every index $i$ such that $\vec{v}[i]=x$ we have that $t[i]=c$.
We say that a tuple $t_a\in R_a^I$ agrees with a tuple $t_b\in R_b^I$ on the value of a variable $x$ if for every vectors $\vec{v_a}, \vec{v_b}$ such that $R_a(\vec{v_a})$ and $R_b(\vec{v_b})$ are atoms of the CQ, and for every pair of indices $i_a,i_b$ such that $\vec{v_a}[i_a]=\vec{v_b}[i_b]=x$, we have that $t_a[i_a]=t_b[i_b]$.
Given a query $Q$ over a schema $\calS=(\calR,\Delta)$, we often identify an FD $\delta\in\Delta$ as a mapping between variables. That is, if $\delta$ has the form
$R_i:A\rightarrow b$ for $A=\{a_1,\ldots,a_{|A|}\}$,
we sometimes denote it by $R_i:\{\vec{v}_i[a_1],\ldots,\vec{v}_i[a_{|A|}]\}\rightarrow\vec{v}_i[b]$, where $\vec{u}[k]$ is the $k$-th variable of $\vec{u}$. To distinguish between these two representations, we usually
denote subsets of integers by $A,B,C,\ldots$, integers by $a,b,c,\ldots$, and
variables by letters from the end of the alphabet.

\subsubsection*{Hypergraphs}

A {\em hypergraph} $\calH=(V,E)$ is a pair consisting of a set $V$ of {\em vertices}, and a set $E$ of non-empty subsets of $V$ called {\em hyperedges} (sometimes {\em edges}).
A \emph{join tree} of a hypergraph $\calH=(V,E)$ is a tree $T$ where the nodes are the hyperedges of $\calH$, and the {\em running intersection} property holds, namely: for all $u \in V$ the set $\{e \in E \mid u \in e\}$ forms a connected subtree in $T$. 
A hypergraph $\calH$ is said to be {\em acyclic} if there exists a join tree for $\calH$.
Two vertices in a hypergraph are said to be {\em neighbors} if they appear in the same edge.
A {\em clique} of a hypergraph is a set of vertices, which are pairwise neighbors in $\calH$.
A hypergraph $\calH$ is said to be {\em conformal} if every clique of $\calH$ is contained in some edge of $\calH$.
A {\em chordless cycle} of $\calH$ is a tuple $(x_1,\ldots,x_n)$, $n\geq3$, such that the set of neighboring pairs of variables of $\{x_1,\ldots,x_n\}$ is exactly $\{\{x_i,x_{i+1}\}\mid 1\leq i\leq n-1\}\cup\{\{x_n,x_1\}\}$. It is well known (see~\cite{DBLP:journals/jacm/BeeriFMY83})
that a hypergraph is acyclic iff it is conformal and contains no chordless cycles.

A {\em pseudo-minor} of a hypergraph $\calH=(V,E)$ is a hypergraph obtained from $\calH$ by a finite series of the following operations:
\begin{enumerate}
\item {\em vertex removal}: removing a vertex from $V$ and from all edges in $E$ that contain it.
\item {\em edge removal}: removing an edge $e$ from $E$ provided that some other $e'\in E$  contains it.
\item {\em edge contraction}: replacing all occurrences of a vertex $v$ (within every edge) with a vertex $u$, provided that $u$ and $v$ are neighbors.
\end{enumerate}

\subsubsection*{Classes of CQs}
To a CQ $Q$ we associate a hypergraph $\calH(Q)=(V,E)$ where the vertices $V$ are the variables of $Q$ and every hyperedge $E$ is a set of variables occurring in a single atom of $Q$, that is $E=\{\{v_1,\ldots,v_n\}\} \mid R_i(v_1,\ldots,v_n)\in\atoms(Q)\}$. With a slight abuse of notation, we also identify atoms
of $Q$ with edges of $\calH(Q)$.
A CQ $Q$ is said to be {\em acyclic} if $\calH(Q)$ is acyclic, and it is said to be {\em free-connex} if both $Q$ and $(V, E\cup\{\free(Q)\})$ are acyclic.

A {\em head-path} for a CQ $Q$ is a sequence of variables $(x,z_1,\ldots,z_k,y)$ with $k\geq 1$, such that:
\begin{enumerate}
\item $\{x,y\}\subseteq \free(Q)$
\item  $\{z_1,\ldots,z_k\}\subseteq V\setminus\free(Q)$
\item It is a {\em chordless path} in $\calH(Q)$, that is, two succeeding variables appear together in some atom, and no two non-succeeding variables appear together in an atom.
\end{enumerate}

For a head-path $P=(z_0,z_1,\ldots,z_k,z_{k+1})$, the \emph{length} of $P$ is defined as $k+1$.
Bagan et al.~\cite{bdg:dichotomy} showed that
an acyclic CQ has a head-path iff it is not free-connex.

\subsubsection*{Computational Model}

In this work, we only consider finite structures. We thus always assume that an input $w$ is a string
over a (fixed) finite alphabet, and we denote by $\encode{w}$ the size of $w$, which is the length
of the string. 
Using data complexity for most of our problems, the input is measured only by the size of
the database instance $I$.
Let $I$ be a database over a schema $\calS=(\calR,\Delta)$.
Flum et al. describe a reasonable encoding $\encode{I}$ of the database as a
string over a finite alphabet~\cite{DBLP:journals/jacm/FlumFG02} with
$\encode{R^I} = \arity(R)|R^{I}|$ and 
$\encode{I}= 1 + |\idom| + |\calR| + \sum_{R\in\calR}{\encode{R^I}}$.
When we say linear time, we mean that the number of operations is $\calO(\encode{I})$.

In this paper we adopt the \emph{Random Access Machine} (RAM) model with uniform cost measure.
For an input of size $n$, every register is of length $\calO(\log(n))$. Operations such as addition of
the values of two registers or concatenation can be performed in constant time. In contrast to the
Turing model of computation, the RAM model with uniform cost measure can retrieve the content of
any register via its unique address in constant time. This enables the construction of large lookup tables
that can be queried in constant time. 

We use a variant of the RAM model named DRAM~\cite{DBLP:journals/amai/Grandjean96}, where the
values stored in registers are at most $n^c$ for some fixed integer $c$. As a consequence, the amount
of available memory is polynomial in $n$. Some previous work mentioned in this paper use a more
restrictive variant of the model, called a DLINRAM~\cite{DBLP:journals/amai/Grandjean96}, where the
values are at most $cn/log(n)$, and therefore the available memory is linear in the size of the input.
All results presented in this article except for the positive results for cardinality dependencies (that
use Lemma~\ref{lemma:posCD}) also apply with the DLINRAM model. The lower bounds shown on the
DRAM model imply the same lower bounds for the more restrictive DLINRAM model. The positive results
in the presence of FDs use the reduction described in Theorem~\ref{thm:qpEquiv}. There, the set of keys
is a projected subset of the input relations, so the memory used is bound by the size of the input.

Grandjean proved that sorting strings in the DLINRAM model can be done in time $\calO(n / \log n)$, 
where $n$ is the size of the input containing strings encoded in some fixed alphabet and separated
by some special symbol~\cite{DBLP:journals/amai/Grandjean96}. We can use this method to sort relations.
To construct the input to the sorting algorithm, we first translate the values from $\idom$ to a possibly
smaller domain $\idom_R$, containing only the values that appear in $R^I$. Note that
$|\idom_R|\leq\encode{R^I}$. Then, we translate these values to binary
(since we are required to use a fixed alphabet), where each value takes $\log(\idom_R)$ bits.
The size of the input to the sorting problem is
$n=\encode{R^I}\cdot\log(|\idom_R|)+(|R^{I}|-1)$.
Therefore, we can sort the tuples of a relation in time $\calO(n / \log n)=\calO(\encode{R^I})$.
That is, it is possible to sort relations within linear time~\cite{DBLP:conf/icdt/SegoufinV17}.

\subsubsection*{Enumeration Complexity}
Given a finite alphabet $\Sigma$ and binary relation $R\subseteq\Sigma^*\times\Sigma^*$,
we denote by $\pEnum{R}$ the {\em enumeration problem} of given an instance $x\in\Sigma^*$, to output all $y\in\Sigma^*$ such that $(x,y)\in R$. 
For a class $\calC$ of enumeration problems, we say that $\pEnum{R}\in\calC$, if there is a RAM that, on input $x\in\Sigma^*$, outputs all $y\in\Sigma^*$ with
$(x,y)\in R$ without repetition such that the first output is computed in time $p(|x|)$ and the delay
between any two consecutive outputs after the first is $d(|x|)$, where:
\begin{itemize}
	\item For $\calC=\DelayClin$, we have $p(|x|)\in O(|x|)$ and $d(|x|)\in O(1)$.
	\item For $\calC=\DelayLin$, we have $p(|x|),d(|x|)\in O(|x|)$.
\end{itemize}
Let $\pEnum{R_1}$ and $\pEnum{R_2}$ be enumeration problems. There is an {\em exact reduction} from $\pEnum{R_1}$ to $\pEnum{R_2}$, denoted $\pEnum{R_1}\leq_e\pEnum{R_2}$, if there are
mappings $\sigma$ and $\tau$ such that the following conditions hold.
\begin{itemize}
	\item $\{\tau(y)\mid y\in\Sigma^*\text{ with }(\sigma(x),y)\in R_2\} =
	\{y'\in\Sigma^*\mid (x,y')\in R_1\}$ in multiset notation.
	\item for every $x\in\Sigma^*$,
	the mapping $\sigma(x)$ is computable in $\calO(|x|)$;
	\item for every $y\in\Sigma^*$ with $(\sigma(x),y)\in R_2$, $\tau(y)$ is computable in constant time;
\end{itemize}
Intuitively, $\sigma$ is used to map instances of $\pEnum{R_1}$ to instances of $\pEnum{R_2}$,
and $\tau$ is used to map solutions to $\pEnum{R_2}$ to solutions of $\pEnum{R_1}$.
The notation $\pEnum{R_1}\equiv_e\pEnum{R_2}$ means that $\pEnum{R_1}\leq_e\pEnum{R_2}$ and $\pEnum{R_2}\leq_e\pEnum{R_1}$.
An enumeration class $\calC$ is said to be \emph{closed under exact reduction} if for
every $\pEnum{R_1}$ and $\pEnum{R_2}$ with $\pEnum{R_2}\in\calC$ and $\pEnum{R_1}\leq_e\pEnum{R_2}$, we have that $\pEnum{R_1}\in\calC$.
Bagan et al.~\cite{bdg:dichotomy} proved that $\DelayClin$ is closed under exact reduction.
The same proof holds for any meaningful enumeration complexity class that guarantees generating all unique answers with at least linear preprocessing time and at least constant delay between answers.

\subsubsection*{Enumerating Answers to CQs}
For a CQ $Q$ over a schema $\calS=(\calR,\Delta)$, we denote by $\qEnum{Q}{\Delta}$ the enumeration problem $\pEnum{R}$, where $R$ is the binary relation between instances $I$ over $\calS$ and sets of mappings $Q(I)$. We consider the size of the query as well as the size of the schema to be fixed. 
Bagan et al.~\cite{bdg:dichotomy} showed that a self-join free acyclic CQ is in $\DelayClin$ iff it is free-connex:

\begin{theorem}[\cite{bdg:dichotomy}]\label{theorem:originalDichotomy}
	Let $Q$ be an acyclic CQ over a schema $\calS=(\calR,\emptyset)$.
	\begin{enumerate}
		\item If $Q$ is free-connex, then $\qEnum{Q}{\emptyset}\in\DelayClin$.
		\item If $Q$ is not free-connex \change{and it is self-join free},  then $\qEnum{Q}{\emptyset}\not\in\DelayClin$, assuming the product of
		two $n \times n$ boolean matrices cannot be computed in time $\calO(n^2)$.
	\end{enumerate}
\end{theorem}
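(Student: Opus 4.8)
The plan is to prove the two implications separately. The positive direction is an application of Yannakakis-style preprocessing together with a join-tree traversal; the negative direction, which I expect to be the real work, is a reduction from boolean matrix multiplication that exploits the head-path.

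For part~(1), suppose $Q$ is acyclic and free-connex, so that $\calH^{+}(Q):=(\ivar(Q),E\cup\{\free(Q)\})$ is acyclic. First I would observe that a free-connex acyclic CQ can be reduced, by a sequence of ``ear removals'' on a join tree of $\calH^{+}(Q)$, to a \emph{full} acyclic CQ $Q'$ -- one in which every variable is free -- where each removal step either deletes an atom contained in a neighbouring atom or projects away an existential variable, and is realized on the instance by a single semijoin. Since $Q$ and $\calS$ have fixed size, every relation has size $O(|I|)$ and, using linear-time sorting in the RAM model~\cite{DBLP:journals/amai/Grandjean96}, each semijoin runs in time $O(|I|)$; hence this preprocessing takes linear time and, by the running-intersection property, leaves the instance globally consistent with $Q'(I)=Q(I)$. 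It then remains to enumerate the answers of a full acyclic self-join-free CQ with constant delay: root a join tree of $\calH(Q')$, sort each relation, and for every atom and every parent-tuple group the matching child-tuples; a depth-first traversal advancing these pointers outputs every answer exactly once, and because the instance is consistent no branch ever dead-ends, so the delay is $O(1)$. This gives $\qEnum{Q}{\emptyset}\in\DelayClin$.

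For part~(2), suppose $Q$ is acyclic but not free-connex. By the characterization of Bagan et al.~\cite{bdg:dichotomy} recalled above, $Q$ has a head-path $(x,z_1,\dots,z_k,y)$ with $x,y\in\free(Q)$, the $z_i\notin\free(Q)$, consecutive variables sharing an atom, and no two non-consecutive path variables sharing an atom. The plan is to reduce boolean matrix multiplication to $\qEnum{Q}{\emptyset}$. Given $n\times n$ boolean matrices $A,B$, build an instance $I$ (of size $O(n^2)$) in which $x$ ranges over row indices and $y$ over column indices: in every atom witnessing the consecutive pair $\{x,z_1\}$ store the relation $\{(i,\ell):A_{i\ell}=1\}$ on those two coordinates, in every atom witnessing $\{z_k,y\}$ store $\{(\ell,j):B_{\ell j}=1\}$, in every atom witnessing a pair $\{z_t,z_{t+1}\}$ store the identity relation (forcing $z_1=\dots=z_k$), fix every variable not on the path to a fresh constant $\star$, and set every remaining atom to impose no further constraint. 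Chordlessness guarantees that no atom links two non-consecutive path variables (in particular none links $x$ and $y$), so the only coupling between $x$ and $y$ runs through the chain, and self-join-freeness lets each relation be populated independently; one then checks that the answers of $Q$ on $I$ are exactly the tuples with $x=i$, $y=j$ for $(AB)_{ij}=1$ (all other free coordinates $\star$), so there are at most $n^2$ of them. If $\qEnum{Q}{\emptyset}\in\DelayClin$, running it on $I$ costs $O(|I|)$ preprocessing plus $O(1)$ per answer, i.e.\ $O(n^2)$ in total, after which we read the pairs $(i,j)$ off the output into an $n\times n$ matrix; this computes $AB$ in time $O(n^2)$, contradicting the hypothesis. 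Hence $\qEnum{Q}{\emptyset}\notin\DelayClin$.

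I expect the negative direction to be the main obstacle: the reduction instance must be simultaneously faithful -- the answer set of $Q$ must recover exactly the support of $AB$, with no spurious tuples and nothing lost -- and cheap to construct, even though the head-path may be long and its atoms may carry extra variables shared with the rest of $Q$. Chordlessness of the head-path and self-join-freeness of $Q$ are precisely the structural features that make such an encoding possible; getting the bookkeeping of these extra variables and atoms right, so that no atom accidentally over-constrains the chain, is the delicate part.
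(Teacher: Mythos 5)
Your proposal is correct and follows essentially the same route as the proof the paper relies on (the theorem is cited from Bagan et al., and the paper recapitulates exactly this argument in Section~4): the positive side via semijoin/projection preprocessing on a join tree of the free-connex extension followed by constant-delay traversal of the resulting full acyclic join, and the negative side via the head-path reduction from boolean matrix multiplication with $A$ on an atom containing $\{x,z_1\}$, $B$ on one containing $\{z_k,y\}$, value propagation along the $z_i$, and constants elsewhere. No genuine divergence or gap to report.
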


\section{FD-Extended CQs}\label{sec:extending}

In this section, we formally define the extended query $Q^+$.
Intuitively, the extension is based on treating the FDs as
dependencies between variables.
Assume that a variable $x$ implies a variable $y$ via an FD $R: x\rightarrow y$.
Given an assignment for $x$ in some answer to the query, we know the unique assignment of $y$ according to $R$. Therefore, we can add the $y$ value to every tuple that has an $x$ value.
Note however that when we extend the relations with these known values, the data still conforms to the FDs. Thus, the process of extension does not remove the FDs. Instead, it ensures that the extended query holds in its structure the added value that the FDs provide. 
In Definition~\ref{definition:fdextended} we formalize the extension of the schema and the query, and in Theorem~\ref{thm:qpEquiv} we formalize the extension of the instance.

We also discuss the relationship between $Q$ and $Q^+$: their equivalence w.r.t.~enumeration
and the possible structural differences between them. 
As a result, we obtain that if $Q^+$ is in a class of queries that allows for tractable enumeration, 
then $Q$ is tractable as well.

The {\em extension} of an atom $R(\vec{v})$ according to an
FD $S\colon A\rightarrow b$ and an atom $S(\vec{u})$
is possible if $\vec{u}[A]\subseteq\vec{v}$
but $\vec{u}[b]\notin\vec{v}$. In this case, $\vec{u}[b]$ is added to the variables of $R$.
The {\em FD-extension} of a query is defined by iteratively extending all
atoms as well as the head according to every possible dependency in the schema, until a fixpoint is reached.
The schema extends accordingly: the arities of the relations increase as their corresponding atoms extend, and the FDs apply in every relation that contains all relevant variables. Dummy variables are added to adjust to the change in arity in case of self-joins.

\begin{definition}[FD-Extended Query]\label{definition:fdextended}
	Let $Q(\vec{p}) \leftarrow R_1(\vec{v_1}), \dots, R_m(\vec{v_m})$ be a CQ
	over a schema $\calS=(\calR,\Delta)$. 
	We define two types of extension steps:
	\begin{itemize}
		\item The extension of an atom $R_i(\vec{v_i})$ according to an FD $R_j\colon A\rightarrow b$.\\
		Prerequisites: $\vec{v_j}[A]\subseteq \vec{v_i}$ and $\vec{v_j}[b]\notin \vec{v_i}$.\\
		Effect: The arity of $R_i$ increases by one, $R_i(\vec{v_i})$ is replaced by $R_i(\vec{v_i},\vec{v_j}[b])$.
		In addition, every $R_k(\vec{v_k})$ such that $R_k$=$R_i$ and $k\neq i$ is replaced with $R_k(\vec{v_k},t_k)$, where $t_k$ is a fresh variable in every such step.
		\item The extension of the head $Q(\vec{p})$ according to an FD $R_j\colon A\rightarrow b$.\\
		Prerequisites: $\vec{v_j}[A]\subseteq \vec{p}$ and $\vec{v_j}[b]\notin \vec{p}$.\\
		Effect: The head is replaced by $Q(\vec{p},\vec{v_j}[b])$.
	\end{itemize}
	The {\em FD-extension} of $Q$ is the query $Q^+(\vec{q})\leftarrow R_1^+(\vec{u_m}), \dots, R_m^+(\vec{u_m})$, obtained by performing all possible extension steps on $Q$ according to FDs of $\Delta$ until a fixpoint is reached. The extension is defined over the schema $\calS^+=(\calR^+,\dltplus)$, where $\calR^+$ is $\calR$ with the extended arities, and $\dltplus$ is
\begin{align*}
&\{R_i^+\colon C\to d \mid\exists (R_j\colon A\rightarrow b)\in\Delta, \exists R_i^+(\vec{u_i})\in\atoms(Q^+),
\exists R_j(\vec{v_j})\in\atoms(Q),\\
&\phantom{R_i^+\colon C\to d \mid}\text{ s.t. }
	\vec{u_i}[C]=\vec{v_j}[A]\text{ and }\vec{u_i}[d]=\vec{v_j}[b]\}.
\end{align*}
	\qed
\end{definition}

Given a query, its FD-extension is unique up to a permutation of the added variables and renaming of the new variables.
As the order of the variables and the naming make no difference w.r.t.~enumeration, we can treat the FD-extension as unique.

\begin{example}
	Consider a schema with $\Delta=\{R_1\colon 1\rightarrow 2, R_3\colon 2,3\rightarrow 1\}$, and the query $Q(x)\leftarrow R_1(x,y),R_2(x,z),R_2(u,z),R_3(w,y,z)$.
	As the FDs are $x\rightarrow y$ and $yz\rightarrow w$, the FD-extension is
	\begin{equation*}
	Q^+(x,y)\leftarrow R_1^+(x,y),R_2^+(x,z,y,w),R_2^+(u,z,t_1,t_2), R_3^+(w,y,z).	
	\end{equation*}
	We first apply $x\rightarrow y$ on the head,
	and then $x\rightarrow y$ and consequently  $yz\rightarrow w$ on $R_2(x,z)$.
	These two FDs are now in the schema also for $R_2$, and the FDs of the extension are
	$\dltplus=\{R_1^+\colon 1\rightarrow 2, R_2^+\colon 1\rightarrow 3, R_2^+\colon 3,2\rightarrow 4, R_3^+\colon 2,3\rightarrow 1\}$.
	\qed
\end{example}

We later show that the enumeration complexity of a CQ $Q$ over a schema with FDs only depends on the structure of $Q^+$, which is implicitly given by $Q$ and its schema. Therefore, we introduce the notions of acyclic and free-connex queries for FD-extensions:

\begin{definition}
	Let $Q$ be a CQ over a schema $\calS=(\calR, \Delta)$, and let $Q^+$ be its FD-extension.
	\begin{itemize}
		\item We say that $Q$ is {\em FD-acyclic}, if $Q^+$ is acyclic.
		\item We say that $Q$ is {\em FD-free-connex}, if $Q^+$ is free-connex.
		\item We say that $Q$ is {\em FD-cyclic}, if $Q^+$ is cyclic.
	\end{itemize}
\end{definition}

The following proposition shows that the classes of acyclic queries and free-connex queries are both closed under constructing FD-extensions.

\begin{proposition}\label{prop:plusstructure}
	Let $Q$ be a CQ over a schema $\calS=(\calR,\Delta)$.
	\begin{itemize}
		\item If the query $Q$ is acyclic, then it is FD-acyclic.
		\item If the query $Q$ is free-connex, then it is FD-free-connex.
	\end{itemize}
\end{proposition}

\begin{proof}
	We prove that if $Q$ is acyclic, then $Q^+$ is also acyclic, by constructing a join-tree of $\calH(Q^+)$ given one of $\calH(Q)$. The same proof can be applied to a join tree containing the head to show that if $Q$ is free-connex, then so is $Q^+$.
	Denote by $Q=Q_0,Q_1,\ldots,Q_n=Q^+$ a sequence of queries such that $Q_{i+1}$ is the result of extending all possible relations of $Q_i$ according to a single FD $\delta\in\Delta$. By induction, it suffices to show that if $\calH(Q_i)$ has a join tree, then $\calH(Q_{i+1})$ has one too.
	So consider an acyclic query $Q_i(\vec{p}) \leftarrow R_1(\vec{v}_1), \dots, R_m(\vec{v}_m)$ extended to the query $Q_{i+1}(\vec{q}) \leftarrow R_1(\vec{u}_1), \dots, R_m(\vec{u}_m)$ according to the FD $\delta=R_j\colon \vec{x}\rightarrow y$, and let $T_i=(V_i,E_i)$ be a join tree of $\calH(Q_i)$.
	We claim that the same tree (but with the extended atoms), is a join tree for $Q_{i+1}$. Formally, define $T_{i+1}=(V_{i+1},E_{i+1})$ such that $V_{i+1}=\{R_k(\vec{u}_k)\mid 1\leq k\leq m\}$ and $E_{i+1}=\{(R_k(\vec{u}_k),R_l(\vec{u}_l))\mid(R_k(\vec{v}_k),R_l(\vec{v}_l))\in E_i\}$. Next we show that the running intersection property holds in $T_{i+1}$, and therefore it is a join tree of $\calH(Q_{i+1})$.
	
	Every new variable introduced in the extension appears only in one atom, so the subtree of $T_{i+1}$ containing such a variable contains one node and is trivially connected.
	For any other variable $w\neq y$, the attribute $w$ appears in the same atoms in $Q$ and $Q^+$. Therefore, the subgraph of $T_{i+1}$ containing $w$ is isomorphic to the subgraph of $T_{i}$ containing $w$, and since $T_i$ is a join tree, it is connected.
	It is left to show that the subtree of $T_{i+1}$ containing $y$ is connected. 
	Since $R_j$ is an atom in $Q$ containing $\delta$, it corresponds to vertices in $T_i$ and $T_{i+1}$ containing $\vec{x}\cup\{y\}$.
	Let $R_k$ be some vertex in $T_{i+1}$ containing $y$.
	We will show that all vertices $S_1,\ldots,S_r$ on the path between $R_k$ and $R_j$ contain $y$.
	If $y$ appears in the vertex $R_k$ in $T_i$, then it also appears in $S_1,\ldots,S_r$ since $T_i$ is a join tree. Since the extension doesn't remove occurrences of variables, $y$ appears in these vertices in $T_{i+1}$ as well.
	Otherwise, $y$ was added to $R_k$ via $\delta$, so $R_k$ contains $\vec{x}$. Since $T_i$ is a join tree, the vertices $S_1,\ldots,S_r$ all contain the variables $\vec{x}$. Thus by the definition of $Q_{i+1}$, $y$ is added to each of $S_1,\ldots,S_r$ (if it was not already there) in $T_{i+1}$. Thus also the subtree of $T_{i+1}$ containing $y$ is connected. Therefore $T_{i+1}$
	is indeed a join tree.
\end{proof}

Example~\ref{example:fd-dif1} shows that the converse of the proposition above does not hold.
This means that, by Theorem~\ref{theorem:originalDichotomy}, there are queries $Q$ such that evaluating $Q^+$ is in $\DelayClin$, but evaluating $Q$ cannot be done with the same complexity if we do not assume the FDs.
The following theorem shows that, when relying on the FDs, evaluating $Q^+$ is equally hard to evaluating $Q$.

\begin{theorem}\label{thm:qpEquiv}
	Let $Q$ be a \change{self-join free}\footnote{\change{In previous version of this work, this theorem was stated in general also for CQs with self-joins, but there was a mistake in the proof in the case of self-joins.}} CQ over a schema $\calS=(\calR,\Delta)$, and let $Q^+$ be its FD-extended query. Then, $\qEnum{Q}{\Delta}\equiv_e\qEnum{Q^+}{\dltplus}$.
\end{theorem}

\begin{proof}
	Consider a query $Q(\vec{p}) \leftarrow R_1(\vec{v}_1), \dots, R_m(\vec{v}_m)$ and
	its FD-extension
	$Q^+(\vec{q})\leftarrow R_1^+(\vec{u}_1), \dots, R_m^+(\vec{u}_m)$. We show the two parts of the equivalence.
	
	\begin{claim}
	$\qEnum{Q}{\Delta}\leq_e\qEnum{Q^+}{\dltplus}$.
	\end{claim}
	
	{\it Construction.}
	Given an instance $I$ for $\qEnum{Q}{\Delta}$, we construct an instance $\sigma(I)$ for $\qEnum{Q^+}{\dltplus}$ with two phases: cleaning and extension.
	In the cleaning phase, we remove tuples that interfere with the extended dependencies.
	For every dependency
	$\delta=R_j\colon X\rightarrow y$ and every atom $R_k(\vec{v}_k)$ that contains
	the corresponding variables (i.e., $X\cup\{y\}\subseteq \vec{v}_k$), 
	we correct $R_k$ according to $\delta$:
	we only keep tuples of $R_k^I$ that agree with some tuple of $R_j^I$ over the values of
	$X\cup\{y\}$.
	We denote the cleaned instance by $I_0$.
	The cleaning phase can be done in linear time by first sorting both $R_j^I$ and $R_k^I$ according to
	$X\cup\{y\}$, and then performing one scan over both of them.
	Next, we perform the extension phase. We follow the extension of the schema as described in Definition~\ref{definition:fdextended} and extend the instance accordingly. This phase results in a sequence of instances $I_0,I_1,\ldots,I_n=\sigma(I)$ that correspond to a sequence of queries $Q=Q_0,Q_1,\ldots,Q_n=Q^+$ such that each query is the result of extending an atom or the head of the previous query according to an FD. If in step $i$ the head was extended, we set $I_{i+1}=I_i$. Now assume some relation $R_k$ is extended according to some FD $R_j\colon X\rightarrow y$.
	For each tuple $t\in R^{I_i}_k$, if there is no tuple $s\in R^{I_i}_j$ that agrees with $t$ over the values of $X$, then we remove $t$ altogether.
	Otherwise, we copy $t$ to $R^{I_{i+1}}_k$ and assign $y$ with the same value that $s$ assigns it.
	The extension phase takes linear time for each step. Since the number of FDs is constant in data complexity, the overall construction takes linear time.
	Note that this construction ensures that the extended dependencies hold in $\sigma(I)$.
	Given an answer $\mu|_{\free(Q^+)}\in Q^+(\sigma(I))$, we set $\tau(\mu)=\mu|_{\free(Q)}$. This projection only requires constant time.
	
	\medskip
	\noindent
	{\it Correctness.}
	We now show that $Q(I)=\{\mu|_{\free(Q)}:\mu|_{\free(Q^+)}\in Q^+(I^+)\}$.
	First, if $\mu|_{\free(Q^+)}$ is an answer of $Q^+(I^+)$, then $\mu$ is a homomorphism from $Q^+$ to $I^+$. Since all tuples of $I^+$ appear (perhaps projected) in $I$, then $\mu$ is also a homomorphism from $Q$ to $I$, and $\mu|_{\free(Q)}\in Q(I)$.
It is left to show the opposite direction: if $\mu|_{\free(Q)}\in Q(I)$ then $\mu|_{\free(Q^+)}\in Q^+(I^+)$.
We show by induction on $Q=Q_0,Q_1,\ldots,Q_n=Q^+$ that $\mu|_{\free(Q_i)}\in Q_i(I_i)$.
\change{For the induction base, we claim that the cleaning phase does not remove ``useful'' tuples.
Since $\mu|_{\free(Q)}\in Q(I)$,
there exist tuples, one of each atom of the query, that agree on the values of $X\cup\{y\}$ (these tuples assign the variables with the values $\mu$ assigns them).
Since $Q$ is self-join free, these tuples are not removed during cleaning because every variable-wise dependency agrees with them.\footnote{\change{This does not holds when the CQ contains self-joins, as the following example demonstrates. Consider the query $Q(x,y,z) \leftarrow R(x,y), S(x,y), R(x,z)$ with the dependency $S:1\rightarrow 2$, and the database instance $I$ with $R^I=\{(a,b),(a,c)\}$ and $S^I=\{(a,b)\}$. The cleaning phase removes $(a,c)$ from $R^I$, and so this construction fails.}}
Therefore, $\mu|_{\free(Q)}\in Q(I_0)$.}
Next assume that $\mu|_{\free(Q_i)}\in Q_i(I_i)$, and we want to show that $\mu|_{\free(Q_{i+1})}\in Q_{i+1}(I_{i+1})$. This claim is trivial in case the head was extended.
Note also that there cannot be two distinct answers $\mu|_{\free(Q_{i+1})}$ and $\mu'|_{\free(Q_{i+1})}$ in $Q_{i+1}(I_{i+1})$ such that $\mu|_{\free(Q_{i})}=\mu'|_{\free(Q_{i})}$, as the added variable is bound by the FD to have only one possible value.
Now consider the case where an atom $R_k(\vec{v}_k)$ was extended according to an FD $R_j\colon X\rightarrow y$ since $X\subseteq\vec{v}_k$. The tuple $\mu(\vec{v}_k)\in R_k^{I_i}$ was extended with the value $\mu(y)$ due to the tuple $\mu(\vec{v}_j)\in R_j^{I_i}$ that agrees with it on the values of $X$, and so $\mu(\vec{v}_k,y)\in R_k^{I_{i+1}}$. In case of self-joins, other atoms with the relation $R_k$ are extended with a new and distinct variable. Such variables will be mapped to this value $\mu(y)$ as well. Overall, we have that $\mu$ (extended by mappings of the fresh variables) is also a homomorphism in $Q_{i+1}(I_{i+1})$.

\begin{claim}
	$\qEnum{Q^+}{\dltplus}\leq_e\qEnum{Q}{\Delta}$.
\end{claim}

\noindent
{\it Construction.}
	Given an instance $I^+$ for $\qEnum{Q^+}{\dltplus}$, we construct an instance $\sigma(I^+)$ for $\qEnum{Q}{\Delta}$ with three phases: cleaning, building a lookup table and projection.
In the cleaning phase, we remove tuples that do not contribute to the answer set of $Q^+$ in order to prevent additional answers from appearing in $Q$ after the projection.
This can be seen as unifying the restrictions of different FDs in $\dltplus$ that originate in the same FD in $\Delta$.
For every FD $R_j\colon X\rightarrow y$ in $\Delta$ and every atom $R_k^+(\vec{u}_k)$ such that $X\cup\{y\}\subseteq\vec{u}_k$, we remove all tuples $t\in R_k^{+I^+}$ that agree with some tuple $s\in R_j^{+I^+}$ over $X$ but disagree with $s$ over $y$. 
The cleaning phase can be done in linear time by first sorting both $R_k^{+I^+}$ and $R_j^{+I^+}$ according to $X$.
	Next, we construct a lookup table $T$ to later reconstruct the assignments to $\free(Q^+)\setminus\free(Q)$. For every $y\in\free(Q^+)\setminus\free(Q)$ added to the head due to an FD $R_j\colon X\rightarrow y$, denote by $\vec{x}$ a vector containing the variables of $X$ in lexicographic order.	
	For every tuple in $R_j^{+I^+}$ that assigns $y$ and $\vec{x}$ with the values $y_0$ and $\vec{x}_0$ respectively, we set $T(\vec{x},\vec{x}_0,y)=y_0$. Note that due to the FD, a key cannot map to two different values.
	We conclude the construction by projecting the relations of $I^+$ according to the schema of $Q$.
	These steps result in the construction of an instance $\sigma(I^+)$ and a lookup table $T$ in linear time.
Note that $\Delta$ hold in $\sigma(I^+)$ since $\dltplus$ contains them.

Given $\mu|_{\free(Q)}\in Q(I)$, we define $\tau(\mu|_{\free(Q)})=\mu|_{\free(Q)}\cup\nu_\mu$, where the mapping $\nu_\mu:\free(Q^+)\setminus\free(Q)\rightarrow\idom$ uses the lookup table: For every $y\in\free(Q^+)\setminus\free(Q)$ added due to some FD $R_j\colon X\rightarrow y$, we set $\nu_\mu(y)=T[(\vec{x},\mu(\vec{x}),y)]$.
	Note that $\tau$ is computable in constant time since we use the lookup table $|\free(Q^+)\setminus\free(Q)|$ times, and each access takes constant time.
	
	\medskip
	\noindent
	{\it Correctness.}
	We first claim that the lookup table succeeds in reconstructing the values for the missing head variables: if $\mu|_{\free(Q)}\in Q(\sigma(I^+))$, then $\tau(\mu|_{\free(Q)})=\mu|_{\free(Q^+)}$.
	By definition, for every $y\in\free(Q)$, $\tau(\mu(y))=\mu(y)$. We need to show the same for $y\in\free(Q^+)\setminus\free(Q)$. In this case, $y$ was added to the head due to some FD $R_j\colon X\rightarrow y$, and $\tau(\mu(y))$ is defined to be $\nu_\mu(y)=T[(\vec{x},\mu(\vec{x}),y)]$. 
	Since $\mu$ is a homomorphism into $\sigma(I^+)$, there exists some tuple in $R_j^{\sigma(I^+)}$ that assigns $\vec{x}$ and $y$ with $\mu(\vec{x})$ and respectively $\mu(y)$. This tuple is a projection of a tuple in $R_j^{I^+}$ that assigns $\vec{x}$ and $y$ with the same values.
	Due to this tuple, when constructing the lookup table, we set $T[(\vec{x},\mu(\vec{x}),y)]=\mu(y)$.

	We now show that $Q^+(I^+)=\{\mu|_{\free(Q^+)}:\mu|_{\free(Q)}\in Q(\sigma(I^+))\}$.
	\change{We start by showing that if $\mu|_{\free(Q^+)}\in Q^+(I^+)$, then $\mu|_{\free(Q)}\in Q(\sigma(I^+))$.
	Let $\mu$ be a homomorphism from the variables of $Q^+$ to $I^+$. Then, for every atom $R_i^+(\vec{u}_i)$ of $Q^+$, we have that $R_i^{+I^+}$ contains the tuple $\mu(\vec{u}_i)$, and these tuples all agree on all variables of $Q^+$. In particular, for every FD $X\rightarrow y$ in $\dltplus$, these tuples agree on the $y$ value. Therefore, none of these tuples are removed in the cleaning phase when constructing $\sigma(I^+)$.
	\footnote{\change{This does not hold when the CQ contains self-joins, as the following example demonstrates. Consider the query $Q(v,w,x,y,z) \leftarrow R(x,y,z), R(v,w,x), S(x,y)$ with the dependency $S:1\rightarrow 2$, and the database instance $I^+$ with $R^{I^+}=\{(a,b,c,d),(e,f,a,b)\}$ and $S^I=\{(a,b),(e,g)\}$. The cleaning phase removes $(e,f,a,b)$ from $R^{I+}$, and so this construction fails.}}
    After projecting the cleaned relations of $I^+$ into those of $\sigma(I^+)$, the projections of these tuples appear in $\sigma(I^+)$. Thus, for every  $R_i(\vec{v}_i)$ in $Q$, we have that $R_i^{\sigma(I^+)}$ contains the tuple $\mu(\vec{v}_i)$, and so $\mu$ is a homomorphism from $Q$ to $\sigma(I^+)$. In other words, $\mu|_{\free(Q)}\in Q(\sigma(I^+))$.}
   
	\change{For the second direction we need to show that if $\mu|_{\free(Q)}\in Q(\sigma(I^+))$, then $\mu|_{\free(Q^+)}\in Q^+(I^+)$.
	Let $\mu$ be a homomorphism from $Q$ to $\sigma(I^+)$.
	Thus, for every  $R_i(\vec{v}_i)$ in $Q$, we have that $R_i^{\sigma(I^+)}$ contains the tuple $\mu(\vec{v}_i)$.
	By construction of $\sigma(I^+)$, these tuples are projections of tuples in the cleaned $I^+$. 
	Consider an atom $R_i^+(\vec{u}_i)$ of $Q^+$. Since $R_i^{\sigma(I^+)}$ contains $\mu(\vec{v}_i)$, we have that the cleaned $R_i^{+I^+}$ contains a 
	tuple $s_i$ whose projection into $\vec{v}_i$ is $\mu(\vec{v}_i)$.
	We claim that $s_i=\mu(\vec{u}_i)$. That is, $s_i$ agrees with $\mu$ also on the new attributes.
	Indeed, if $R_i$ was extended due to an FD $R_j\colon X\rightarrow y$, then we know that $\mu(\vec{v}_j)\in R_j^{\sigma(I^+)}$, and that $\mu(\vec{v}_j)$ and $s_i$ must agree on $y$ (in addition to agreeing on $X$), otherwise this $s_i$ would have been deleted in the cleaning phase. Therefore, $s_i$ assigns $y$ with $\mu(y)$.S
	Then, for every atom $R_i^+(\vec{u}_i)$ of $Q^+$, we have that $R_i^{+I^+}$ contains the tuple $\mu(\vec{u}_i)$, and we conclude that $\mu|_{\free(Q^+)}\in Q^+(I^+)$.}
\end{proof}

As described in Corollary~\ref{cor:posExtension}, 
a direct consequence of Theorem~\ref{thm:qpEquiv} is that FD-extensions can be used to expand tractable enumeration classes.
This is due to the fact that $\qEnum{Q}{\Delta}\leq_e\qEnum{Q^+}{\dltplus}$ \change{for self-join free CQs}.
The opposite direction is used in Section~\ref{sec:Positive} to show the lower bounds required for a dichotomy.
\change{To apply a similar conclusion also to CQs with self-joins, we can consider a self-join free variant with the same structure.
Let $\SJF$ be a function that assigns each atom with a different relation symbol. For example, if an atom $R(\vec{v})$ appears in the CQ for the $k$th time, we can replace it by the atom $R_k(\vec{v})$ where $R_k$ is a new relation symbol.
We denote the transformed CQ by $\SJF(Q)$ and we also replace the symbols in the dependency set accordingly to obtain $\SJF(\Delta)$. }

\begin{corollary}\label{cor:posExtension}
	Let $\calC$ be an enumeration class that is closed under exact reduction, $Q$ be a CQ, and $Q^+$ be its FD-extension.
	If \change{$\qEnum{\SJF(Q)^+}{\SJF(\Delta)_{Q^+}}$ is in $\calC$}, then $\qEnum{Q}{\Delta}$ is in $\calC$ too.
\end{corollary}

\begin{proof}
\change{Whenever the self-join free version of a CQ is tractable, the original CQ is tractable too, as we can duplicate the original relations to construct relations for the new distinct symbols and get the same result set. Formally, this proves that $\qEnum{Q}{\Delta}\leq_e\qEnum{\SJF(Q)}{\SJF(\Delta)}$.
	According to Theorem~\ref{thm:qpEquiv}, $\qEnum{\SJF(Q)}{\SJF(\Delta)}\leq_e\qEnum{\SJF(Q)^+}{\SJF(\Delta)_{Q^+}}$.
	By combining these two facts, we get that $\qEnum{Q}{\Delta}\leq_e\qEnum{\SJF(Q)^+}{\SJF(\Delta)_{Q^+}}$. 
	Since $\calC$ is closed under exact reduction, if $\qEnum{\SJF(Q)^+}{\SJF(\Delta)_{Q^+}}\in\calC$, then $\qEnum{Q}{\Delta}\in\calC$.}
\end{proof}

\change{
Since free-connex queries are in $\DelayClin$ and $\DelayClin$ is closed under exact reduction, we immediately get that self-join free CQs that are FD-free-connex are in $\DelayClin$. To conclude the same for CQs with self-joins, we should note that self-joins are irrelevant for the free-connex property, as the following proposition states. 
}

\change{
\begin{proposition}\label{prop:sj-free-connex}
$Q^+$ is free-connex iff $\SJF(Q)^+$ is free-connex.
\end{proposition}
\begin{proof}
First note that $Q$ and $\SJF(Q)$ differ only on the relation names, but they have the same sets of variables in their atoms.
Then, note that the extension procedure mostly depends only on the variable-sets inside the atoms and the matching dependencies, but it has a small sensitivity to the relation names: in the case of self-joins, additional fresh variables are added to the extension; however, every such variable only appears in one atom.
So the difference between $Q^+$ and $\SJF(Q)^+$ is only in the relation names and the fact that atoms in $Q^+$ may have additional variables, where each such variable appears only in one atom.
Now note the following properties of free-connexity: (1) it is not affected by relation names, and (2) it is not affected by the addition or removal of a variable that appears only in a single atom. Both these properties hold because a join tree for a CQ is also a valid join tree for a transformed CQ where the relation names are changed, fresh variables are added in single atoms, or variables that appear only in one atom are removed.
\end{proof}
}

\change{
We can now conclude the same corollary for all CQs.
}

\begin{corollary}\label{cor:realpositiveresult}
	Let $Q$ be a CQ over a schema $\calS=(\calR,\Delta)$.
	If $Q$ is FD-free-connex, then $\qEnum{Q}{\Delta}\in\DelayClin$.
\end{corollary}
\begin{proof}
\change{
	Since $Q^+$ is free-connex, and according to Proposition~\ref{prop:sj-free-connex}, $\SJF(Q)^+$ is free-connex too. 
	Given an instance over a schema with FDs, the same instance is also valid over a similar schema without FDs, so using the identity mapping shows that $\qEnum{\SJF(Q)^+}{\SJF(\Delta)_{Q^+}}\leq_e\qEnum{\SJF(Q)^+}{\emptyset}$. So, by Theorem~\ref{theorem:originalDichotomy}, we have that $\qEnum{\SJF(Q)^+}{\SJF(\Delta)_{Q^+}}\in\DelayClin$.
	 Following Corollary~\ref{cor:posExtension}, $\qEnum{Q}{\Delta}\in\DelayClin$.
	 }
\end{proof}

We can now revisit Example~\ref{example:fd-dif1}.
The query $Q(x,y)\leftarrow R_1(z,x), R_2(z,y)$ is not free-connex. 
Therefore, ignoring the FDs, according to Theorem~\ref{theorem:originalDichotomy} it is not in $\DelayClin$. 
However, given $R_2\colon z\rightarrow y$, the FD-extended query is $Q^+(x,y)\leftarrow R_1^+(z,y,x), R_2^+(z,y)$. As it is free-connex, evaluating $Q^+$ is in $\DelayClin$ by Corollary~\ref{cor:realpositiveresult}.

\section{A Dichotomy for Acyclic CQs} \label{sec:Positive}

In this section, we characterize which self-join free FD-acyclic queries are in $\DelayClin$. We use the notion of FD-extended queries defined in the previous section to establish a dichotomy stating that enumerating the answers to an acyclic query is in $\DelayClin$ iff the query is FD-free-connex. The positive case for the dichotomy is described in Corollary~\ref{cor:realpositiveresult}, and this section concludes the negative case. We prove the following theorem:

\begin{theorem}\label{thm:acyclicDichotomy}
	Let $Q$ be a self-join free FD-acyclic CQ over the schema $\calS=(\calR,\Delta)$.
	\begin{itemize}
		\item If $Q$ is FD-free-connex, then $\qEnum{Q}{\Delta}\in\DelayClin$.
		\item If $Q$ is not FD-free-connex, then $\qEnum{Q}{\Delta}\not\in\DelayClin$, assuming that 
		the product of two $n \times n$ boolean matrices cannot be computed in time $\calO(n^2)$.
	\end{itemize}	
\end{theorem}

Note that the restriction of considering only self-joins-free queries is required only for the negative side. This assumption is standard~\cite{bdg:dichotomy,bb:thesis,Kimelfeld:2012:DCD}, as it allows to assign different atoms with different relations independently. 
The hardness result described here builds on that of Bagan et al.~\cite{bdg:dichotomy} for databases
that are assumed not to have FDs, and it relies on the hardness of \emph{the boolean matrix multiplication problem}.
This problem is defined as the evaluation $\qEnum{\Pi}{\emptyset}$ of the query
$\Pi(x,y)\leftarrow A(x,z),B(z,y)$ over the schema $(\{A,B\},\emptyset)$ where $A,B\subseteq\{1,\ldots,n\}^2$.
It is strongly conjectured that this problem is not computable in $\calO(n^2)$ time, 
and the best currently known algorithms require $\calO(n^\omega)$ time for
some $2.37< \omega< 2.38$~\cite{MatrixMultiplication,MatrixMultiplication2}.

The original proof describes an exact reduction $\qEnum{\Pi}{\emptyset}\leq_e\qEnum{Q}{\emptyset}$.
Since $Q$ is acyclic but not free-connex, it contains a head-path $(x,z_1,\ldots,z_k,y)$.
For a given an instance of the
matrix multiplication problem, an instance of $\qEnum{Q}{\emptyset}$ is constructed, where the variables
$x$,$y$ and $z_1,\ldots,z_k$ of the head-path encode the variables $x$, $y$ and $z$ of $\Pi$,
respectively. All
other variables of $Q$ are assigned constants. This way, $A$ is encoded by an atom containing $x$ and
$z_1$, and $B$ is encoded by an atom containing $z_k$ and $y$. Atoms containing some $z_i$ and
$z_{i+1}$ propagate the value of $z$. Since $x$ and $y$ are in $\free(Q)$, but $z_i$ are not,
the answers to $Q$ correspond to those of $\Pi$.
As no atom of $Q$ contains both $x$ and $y$, the instance can be constructed in linear time.
Constant delay enumeration for $Q$ following a linear time preprocessing would result in the computation of the answers of $\Pi$ in $\calO(n^2)$ time.

FDs restrict the relations that can be assigned to atoms. This means that the reduction cannot be freely performed on databases with FDs, and the proof no longer holds. The following example illustrates where the reduction fails in the presence of FDs.

\begin{example}
	The CQ from Example~\ref{example:fd-dif1} has the form $Q(x,y)\leftarrow R_1(z,x),R_2(z,y)$
	with the single FD $\Delta=\{R_2\colon z\rightarrow y\}$.
	In the previous section, we show that it is in $\DelayClin$, so the reduction should fail.
	Indeed, it would assign $R_2$ with the same relation as $B$ of the matrix multiplication problem,
	but this may have two tuples with the same $z$ value and different $y$ values.
	Therefore, the construction does not yield a valid instance of $\qEnum{Q}{\Delta}$.\qed
\end{example}
We now provide a modification of this construction to show an exact reduction from $\qEnum{\Pi}{\emptyset}$ to $\qEnum{Q^+}{\dltplus}$. Any violations of the FDs are fixed by carefully picking more variables other than those of the head-path to take the roles of $x$,$y$ and $z$ of the matrix multiplication problem. This is done by introducing the sets $V_x$,$V_y$ and $V_z$ which are subsets of $\var(Q)$. We say that a variable $\beta$ {\em plays the role} of $\alpha$, if $\beta\in V_{\alpha}$.
To clarify the reduction, we start by describing a restricted case, where all FDs are unary. The basic idea in the case of general FDs remains the same, but it requires a more involved construction of the sets $V_\alpha$.

\subsection{Unary Functional Dependencies}

For the unary case, we define the sets $V_x, V_y$ and $V_z$ to hold the variables that iteratively imply $x$, $y$ and some $z_i$, respectively. That is, for $\alpha\in\{x,y,z_1,\ldots,z_k\}$ we set $V_\alpha:=\{\alpha\}$ and apply $V_\alpha:=V_\alpha\cup\{\gamma\in\var(Q)\mid\gamma\rightarrow\beta\in\dltplus\wedge\beta\in V_\alpha\}$ until a fixpoint is reached. We then define $V_z:=V_{z_1}\cup\cdots\cup V_{z_k}$.

\paragraph*{The Reduction.\;}
Let $I=(A^I,B^I)$ be an instance of $\qEnum{\Pi}{\emptyset}$.
We define $\sigma(I)$ by describing the relation $R^I$ for every atom $R(\vec{v})\in\atoms(Q^+)$.
If $\var(R)\cap V_y=\emptyset$, then every tuple $(a,c)\in A^I$ is copied to a tuple in $R^I$. Variables in $V_x$ get the value $a$, variables in $V_z$ get the value $c$, and variables that play no role are assigned a constant $\bot$.
That is, we define
$R^{\sigma(I)}$ to be $\{(f(v_1,a,c),\ldots,f(v_k,a,c))\mid (a,c)\in A^I\}$, where:
\begin{equation*}
f(v_i,a,c) = \left\{
\begin{array}{ll}
a &\text{ if }v_i\in V_x\setminus V_z,\\
c &\text{ if }v_i\in V_z\setminus V_x,\\
(a,c) &\text{ if }v_i\in V_x\cap V_z,\\
\bot &\text{otherwise.}
\end{array}
\right.	
\end{equation*}
If $\var(R)\cap V_y\neq\emptyset$, we show that $\var(R)\cap V_x=\emptyset$,
see the sketch for the proof for the well-definedness of the reduction below.
In this case we define the relation similarly with $B^I$. Given a tuple $(c,b)\in B^I$, the variables of $V_y$ get the value $b$, and those of $V_z$ are assigned with $c$.

\begin{example}
	Consider the query $Q^+(x,y,v)\leftarrow R(u,x,z),S(v,y,z)$ with FDs $\dltplus=\{R\colon u\rightarrow x,R\colon u\rightarrow z,S\colon y\rightarrow v\}$.
	Using the head-path $(x,z,y)$, the reduction sets $V_x=\{x,u\}$, $V_y=\{y\}$ and $V_z=\{z,u\}$.
	Given an instance $I$ of the matrix multiplication problem with relations $A^I$ and $B^I$, every tuple $(a,c)\in A^I$ results in a tuple $((a,c),a,c)\in R^{\sigma(I)}$, and every tuple $(c,b)\in B^I$ results in a tuple $(\bot,b,c)\in S^I$.\qed
\end{example}
We now outline the correctness of this reduction.
\begin{description}
	\item [Well-defined reduction:] For an atom $R$, either we have $\var(R)\cap V_y=\emptyset$ or $\var(R)\cap V_x=\emptyset$.
	That is, no atom contains variables from both $V_x$ and $V_y$.
	Due to the definition of $Q^+$, this atom would otherwise also contain both $x$ and $y$.
	However, they cannot appear in
	the same relation according to the definition of a head-path.
	The reduction is therefore well defined, and it can be constructed in linear time via copy and projection.
	\item [Preserving FDs:] The construction ensures that if an FD $\gamma\rightarrow\alpha$ exists, then $\gamma$ has all the roles of $\alpha$. Therefore, either $\alpha$ has no role 
	and corresponds to the constant $\bot$, or every value that appears in $\alpha$ also appears in $\gamma$.
	In any case, all FDs are preserved.
	\item [1-1 mapping of answers:] If a variable of $V_z$ had appeared in the head of $Q^+$, then by the definition of $Q^+$, some $z_i$ would have been in the head as well. This cannot happen according to the definition of a head-path.
	Therefore, $V_z\cap\free(Q^+)=\emptyset$, and the head only encodes the $x$ and $y$ values of the matrix multiplication problem,
	so two different solutions to $\qEnum{Q^+}{\dltplus}$ must differ in either $x$ or $y$,
	and correspond to different solutions of $\qEnum{\Pi}{\emptyset}$.
	For the other direction, the head necessarily contains the variables $x$ and $y$.
	Therefore, two different solutions to $\qEnum{\Pi}{\emptyset}$ correspond to different solutions of $\qEnum{Q^+}{\dltplus}$.
\end{description}

\subsection{General Functional Dependencies}

Next we show how to lift the idea of this reduction to the case of general FDs.
In the case of unary FDs, 
we ensure that the construction does not violate a given FD $\gamma\rightarrow\alpha$, 
by simply encoding the values of $\alpha$ to $\gamma$.
In the general case, when allowing more than one variable on the left-hand side of an FD $\gamma_1,\ldots,\gamma_k\rightarrow\alpha$,
we must be careful when choosing the variables $\gamma_j$ to which we copy the values of $\alpha$. Otherwise, as the following example shows, we will not be able to construct the instance in linear time.

\begin{example}
	Consider the query
	$Q(x,y)\leftarrow R_1(x,z,t_1),R_2(z,y,t_1,t_2)$
	over a schema with the FD $R_2\colon t_1t_2\rightarrow y$.
	Note that $Q=Q^+$ is acyclic but not free-connex, and that $(x,z,y)$ is a head-path in $\calH(Q^+)$.
	To repeat the idea shown in the unary case and ensure that the FDs still hold,
	the variable on the right-hand side of every FD is encoded to the variables on the left-hand side.
	If we encode $y$ to $t_1$, then $R_1$ would contain the encodings of $x$, $y$ and $z$.
	This means that its size will not be linear in that of the matrix multiplication instance,
	and we cannot hope for linear time construction.
	On the other hand, if we choose to encode $y$ only to $t_2$, the reduction works.
	\qed
\end{example}

In the following central lemma, we describe how to carefully pick the variables to which we assign roles in a way that meets the requirements we need for the reduction.
We prove requirements~\ref{lemmaitem:xyzinVxVyVz} and~\ref{lemmaitem:disjointVz} to guarantee a one-to-one mapping between the results of the two problems. Requirement~\ref{lemmaitem:disjointVxVy} enables linear time construction, while requirement~\ref{lemmaitem:leftsideconsistent} is used to show that all FDs are preserved.
The idea is that we consider the join-tree of $Q^+$ and define a partition of its atoms. We then define $V_x$ and $V_y$ to hold variables that appear only in different parts of the tree, ensuring that no atom contains variables of each. The running intersection property of a join-tree is then used to guarantee that the sets are inclusive enough to correct all FD violations.

\begin{lemma}\label{lemma:reductionsets}
	Let $Q$ be a self-join free CQ over a schema $\calS=(\calR,\Delta)$, such that $Q^+$ is acyclic but not free-connex.
	Further let $(x,z_1,\ldots,z_k,y)$ be a head-path of $Q^+$.
	Then, there exist sets of variables $V_x$, $V_y$ and $V_z$ such that:
	\begin{enumerate}
		\item\label{lemmaitem:xyzinVxVyVz} $x\in V_x$, $y\in V_y$, $\{z_1,\ldots z_k\}\subseteq V_z$.
		\item\label{lemmaitem:disjointVz} $V_z\cap\free(Q^+)=\emptyset$.
		\item\label{lemmaitem:disjointVxVy} For every $R\in\atoms(Q^+)$: $\var(R)\cap V_y=\emptyset$ or $\var(R)\cap V_x=\emptyset$.
		\item\label{lemmaitem:leftsideconsistent} For every $U\rightarrow v\in\dltplus$ s.t. $v\in V_\alpha$ with $\alpha\in\{x,y,z\}$: $U\cap V_\alpha\neq\emptyset$.
	\end{enumerate}
\end{lemma}

\begin{figure}[t]
	\begin{center}
		\begin{tikzpicture}[node distance = 0.07]
		
		
		\node[rectangle, rounded corners, draw, minimum width=38.8em, minimum height=4em]
		(H) at (0,0) {};
		
		
		\node (dotleft1) [left=3.35 of H.north] {};
		\node (dotleft2) [left=3.35 of H.south] {};
		\node (dotright1) [right=3.38 of H.north] {};
		\node (dotright2) [right=3.38 of H.south] {};
		
		
		\node (Tx) [below left = 0 and 4.2 of H.south] {$T_x$};
		\node (Ty) [below right = 0 and 4.4 of H.south] {$T_y$};
		\node (Tz) [below = 0 of H.south] {$T_\text{mid}$};
		
		
		\node[rectangle, rounded corners, draw, thick, inner sep = 3] (e1) [below right= -0.5 and 0.52 of H.west] {$e(x,z_1)$};
		
		\node (d1r) [right = of e1] {$\ldots$};
		\node (d1l1) [left=0.3 of e1] {};
		\node (d1l2) [above left= -0.2 and 0.3 of e1] {};
		\node (d1l3)  [below left= -0.2 and 0.3 of e1] {};
		
		\node[rectangle, rounded corners, draw, thick, inner sep = 3] (vx) [right = of d1r] {$sep_x$\phantom{)}\!\!};
		\node (d2r1) [right= of vx] {$\ldots$};
		\node (d2d1) [below= 0.3 of vx] {};
		\node (d2d2) [below left= 0.3 and -0.2 of vx]  {};
		\node (d2d3) [below right= 0.3 and -0.2 of vx]  {};
		
		\node[rectangle, rounded corners, draw, thick, inner sep = 3] (e2) [right = of d2r1] {$e(z_1,z_2)$};
		\node (d3r1) [right= of e2] {$\ldots$};
		\node (d3d1) [below= 0.3 of e2] {};
		\node (d3d2) [below left= 0.3 and -0.6 of e2]  {};
		\node (d3d3) [below right= 0.3 and -0.6 of e2]  {};
		
		\node[rectangle, rounded corners, draw, thick, inner sep = 3] (e3) [right = of d3r1] {$e(z_{k-1},z_k)$};
		\node (d4r1) [right= of e3] {$\ldots$};
		\node (d4d1) [below= 0.3 of e3] {};
		\node (d4d2) [below left= 0.3 and -0.8 of e3]  {};
		\node (d4d3) [below right= 0.3 and -0.8 of e3]  {};
		
		\node[rectangle, rounded corners, draw, thick, inner sep = 3] (vy) [right = of d4r1] {$sep_y$\phantom{)}\!\!};
		\node (d5r1) [right= of vy] {$\ldots$};
		\node (d5d1) [below= 0.3 of vy] {};
		\node (d5d2) [below left= 0.3 and -0.2 of vy]  {};
		\node (d5d3) [below right= 0.3 and -0.2 of vy]  {};
		
		\node[rectangle, rounded corners, draw, thick, inner sep = 3] (e4) [right = of d5r1] {$e(z_k,y)$};
		\node (d6r1) [right=0.3 of e4] {};
		\node (d6r2) [above right= -0.2 and 0.3 of e4] {};
		\node (d6r3)  [below right= -0.2 and 0.3 of e4] {};

		\draw[dashed] (dotleft1.center) -- (dotleft2.center);
		\draw[dashed] (dotright1.center) -- (dotright2.center);
		
		\draw[thick] (e1) -- (d1r);
		\draw[thick] (e1.west) -- (d1l1);
		\draw[thick] (e1.west) -- (d1l2);
		\draw[thick] (e1.west) -- (d1l3);
		
		\draw[thick] (vx) -- (d1r);
		\draw[thick] (vx) -- (d2r1);
		\draw[thick] (vx.south) -- (d2d1);
		\draw[thick] (vx.south) -- (d2d2);
		\draw[thick] (vx.south) -- (d2d3);
		
		\draw[thick] (e2) -- (d2r1);
		\draw[thick] (e2) -- (d3r1);
		\draw[thick] (e2.south) -- (d3d1);
		\draw[thick] (e2.south) -- (d3d2);
		\draw[thick] (e2.south) -- (d3d3);
		
		\draw[thick] (e3) -- (d3r1);
		\draw[thick] (e3) -- (d4r1);
		\draw[thick] (e3.south) -- (d4d1);
		\draw[thick] (e3.south) -- (d4d2);
		\draw[thick] (e3.south) -- (d4d3);
		
		\draw[thick] (vy) -- (d4r1);
		\draw[thick] (vy) -- (d5r1);
		\draw[thick] (vy.south) -- (d5d1);
		\draw[thick] (vy.south) -- (d5d2);
		\draw[thick] (vy.south) -- (d5d3);
		
		\draw[thick] (e4) -- (d5r1);
		\draw[thick] (e4.east) -- (d6r1);
		\draw[thick] (e4.east) -- (d6r2);
		\draw[thick] (e4.east) -- (d6r3);
		
		\end{tikzpicture}
	\end{center}
	\caption{Join tree $T$ of $\calH(Q^+)$ for head-paths of length greater than $2$. The subtrees $T_x$, $T_y$ and $T_\text{mid}$ are disjoint, and are separated by the nodes $sep_x$ and $sep_y$.}
	\label{fig:generalnegative1}
\end{figure}

\begin{proof}
	We first define a partition of the atoms of $Q$ into two or three sets: $T_x$, $T_y$ and possibly $T_\text{mid}$.
	Let $T$ be a join tree of $\calH(Q^+)$, and denote the hyperedges on the head-path by $e(x,z_1),\ldots,e(z_k,y)$.
	Note that, by definition, each hyperedge of the head-path is a vertex of $T$  and an atom of $Q^+$. By the running intersection property of $T$ and since the path is cordless, we can conclude that there is a simple path $P$ from $e(x,z_1)$ to $e(z_k,y)$ in $T$, such that $e(z_1,z_2),\ldots,e(z_{k-1},z_k)$ lie on that path in the order induced by the head-path.
	Let $sep_x$ be the first atom on the path $P$ that does not contain $x$. This exists because $e(z_k,y)$ does not contain $x$, as the head-path is chordless. Similarly, let $sep_y$ be the last atom on $P$ that does not contain $y$.
	Let $T_x$ be the set of atoms $v$ such that the unique path from $v$ to $e(x,z_1)$ in $T$ does not go through $sep_x$.
	Similarly, let
	$T_y$ be the set of atoms $w$ such that the unique path from $w$ to $e(z_k,y)$ in $T$ does not go through $sep_y$.
	Next set $T_\text{mid}=V(T)\setminus(T_x\cup T_y)$.
	Note that $e(x,z_1)\in T_x$ and $e(z_k,y)\in T_y$, but $T_\text{mid}$ may
	be empty (this happens in the case that the head-path is of length two). By definition, the atoms of $Q^+$ are exactly $T_x\cup T_\text{mid}\cup T_y$, and next we show that this union is disjoint. Figure~\ref{fig:generalnegative1} depicts the established partition.
	
	\begin{claim}
		The sets $T_x$ and $T_y$ are disjoint.
	\end{claim}
	\begin{proof}[Proof of the Claim]
		Assume by contradiction that there is some $v\in T_x\cap T_y$. Let $P_x$ be the unique simple path from $v$ to $e(x,z_1)$, and recall that since $v\in T_x$ it does not go through $sep_x$. Similarly let $P_y$ be the unique simple path from $v$ to $e(z_k,y)$ that does not go through $sep_y$.
		
		We first claim that there exists some atoms $w$ that appears in all three paths $P$, $P_x$ and $P_y$. Take $w$ to be the first atom on $P_x$ that is also in $P$ and set $P_x^w$ to be the simple path from $v$ to $w$.
		Such an atom $w$ exists because the last atom of $P_x$ is $e(x,z_1)$ which is in $P$.
		Further set $P^w$ to be the simple path from $w$ to $e(z_k,y)$.
		Concatenating the paths $P_x^w$ and $P^w$,
		we obtain a simple path from $v$ to $e(z_k,y)$. Since the simple paths in a tree are unique, this is exactly $P_y$, and so $w$ is also in $P_y$.
		
		Our second claim is that if an atom $u$ is in both $P$ and $P_x$, then it contains the variable $x$. Assume by contradiction that such an atom $u$ does not contain $x$. Then $u$ is an atom on $P$ not containing $x$, and by definition of $sep_x$, the simple path from $u$ to $e(x,z_1)$ contains $sep_x$. As this path is a subpath of $P_x$, $P_x$ contains $sep_x$, in contradiction to the fact that $v\in T_x$.
		Similarly, if an atom is in both $P$ and $P_y$, then it contains $y$.
		
		Combining the two claims, we have an atom $w$ containing both $x$ and $y$, in contradiction to the fact that a head-path is chordless by definition. Therefore we conclude that $T_x$ and $T_y$ are indeed disjoint. 
	\end{proof}
	
	Now we are ready to define the sets of variables $V_x, V_y$ and $V_z$.
	We define $V_x$ recursively to contain $x$ and variables that imply those of $V_x$, but without variables that appear outside of $T_x$.
	$V_y$ is defined symmetrically.
	$V_z$ contains $z_1,\ldots,z_k$ and variables that imply those of $V_z$ but without free variables.
	Formally,
	\begin{equation*}
	\mathsf{Implies}(V)=\{u\in\var(Q)\mid \exists U\rightarrow w\in\dltplus 		\text{ with }w\in V\text{ and }u\in U\}
	\end{equation*}
	 for $V\subseteq\var(Q)$, and we define via fixpoint iteration the following:
	\begin{itemize}
		\item $\bm{V_x}$: base $V_x:=\{x\}$; rule $V_x:=(V_x\cup\mathsf{Implies}(V_x))\setminus\var(T_y\cup T_\text{mid})$
		\item $\bm{V_y}$: base $V_y:=\{y\}$; rule $V_y:=(V_y\cup\mathsf{Implies}(V_y))\setminus\var(T_x\cup T_\text{mid})$
		\item$\bm{V_z}$: base $V_z:=\{z_1,\ldots z_k\}$; rule $V_z:=(V_z\cup\mathsf{Implies}(V_z))\setminus\free(Q^+)$
	\end{itemize}

	We now prove that $V_x$, $V_y$ and $V_z$ meet the requirements of the lemma. Requirements~\ref{lemmaitem:xyzinVxVyVz} and~\ref{lemmaitem:disjointVz} follow immediately from the definition of the sets. To prove requirement~\ref{lemmaitem:disjointVxVy}, let $R\in\atoms(Q^+)$. If $R\in T_x$, then by definition of $V_y$ we have that $\var(R)\cap V_y=\emptyset$. Otherwise, $R\in T_y\cup T_\text{mid}$, and similarly $\var(R)\cap V_x=\emptyset$. It is left to show requirement~\ref{lemmaitem:leftsideconsistent}.
	
	Let $\delta = U\rightarrow v\in\dltplus$ where $v\in V_\alpha$. We first show the case of $\alpha=z$. If $U\cap V_z=\emptyset$, then $U\subseteq\free(Q^+)$, and by the definition of $Q^+$, $v\in\free(Q^+)$, which is a contradiction to the definition of $V_z$.
	Now we prove the case where $\alpha= x$. The case $\alpha=y$ is symmetric.
	Denote by $e(U,v)$ an atom containing all variables of $\delta$. As $v\in V_x$, we know that $e(U,v)\notin T_y\cup T_\text{mid}$, therefore $e(U,v)\in T_x$.
	Assume by contradiction that $U\cap V_x = \emptyset$.
	Let $u\in U$. By definition of $V_x$, this means that $u\in\var(e_u)$ for some $e_u\in T_y\cup T_\text{mid}$. 
	As $T_x$, $T_y$ and $T_\text{mid}$ are disjoint, we have that $e_u\notin T_x$, which means that the path between $e_u$ and $e(x,z_1)$ goes through $sep_x$. This means that the path from $e_u$ to $e(U,v)$ goes through $sep_x$ too, otherwise the concatenation of this path with the path from $e(U,v)$ to $e(x,z_1)$ would result in a path from $e_u$ to $e(x,z_1)$ not going through $sep_x$. By the running intersection property, $u\in\var(sep_x)$.
	Since this is true for all $u\in U$,
	it follows that $v\in\var(sep_x)$ by definition of $Q^+$, contradicting the fact that $v\in V_x$.
\end{proof}

With the sets $V_x,V_y,V_z$ at hand, we can now perform the reduction in the presence of general FDs.

\begin{lemma}\label{lemma:acyclicgeneralfdnegative}
	Let $Q$ be a self-join free CQ over a schema $\calS=(\calR,\Delta)$.
	If $Q^+$ is acyclic and not free-connex, then $\qEnum{\Pi}{\emptyset}\leq_e\qEnum{Q^+}{\dltplus}$.
\end{lemma}

\begin{proof}
	Let $I_{A,B}=(A^I,B^I)$ be an instance of $\qEnum{\Pi}{\emptyset}$ over the domain $\calD=\{1,\ldots,n\}$.
	We define an instance $\sigma(I_{A,B})$ of $\qEnum{Q^+}{\dltplus}$ based on the sets $V_x$, $V_y$ and $V_z$ from Lemma~\ref{lemma:reductionsets} and the relations $A^I$ and $B^I$.
	Since $Q^+$ is acyclic but not free-connex, it contains some head-path $(x,z_1,\ldots,z_k,y)$.
	
	To define the instance $\sigma(I_{A,B})$, we first fix the functions $f_A$ and $f_B$:
	\begin{align*}
	f_A(v,a,c) &= \left\{
	\begin{array}{lr}
	a & : v\in V_x\setminus V_z\\
	c & : v\in V_z\setminus V_x\\
	(a,c) & : v\in V_{x}\cap V_z\\
	\bot & : \text{otherwise}
	\end{array}
	\right.,\qquad   
	f_B(v,b,c) &= \left\{
	\begin{array}{lr}
	b & : v\in V_y\setminus V_z\\
	c & : v\in V_z\setminus V_y\\
	(b,c) & : v\in V_{y}\cap V_z\\
	\bot & : \text{otherwise}
	\end{array}  
	\right.
	\end{align*} 
	We partition all relational atoms of $Q^+$ into two 
	sets: $\calR_A^+$ and $\calR_B^+$.
	The set $\calR_A^+$ is defined as $\{R^+\in\atoms(Q^+)\mid \var(R^+)\cap V_y=\emptyset\}$ and $\calR_B^+$ is $\atoms(Q^+)\setminus\calR_A^+$.
	To obtain an instance $\sigma(I_{A,B})$ of $\qEnum{Q^+}{\Delta^+}$, we apply $f_A$ to the atoms in
	$\calR_A^+$ using the values of $A^I$, while the atoms in $\calR_B^+$ use $f_B$ and $B^I$. That is, if $R^+(u_1,\ldots,u_m)\in\calR_A^+$, then $(R^+)^{\sigma(I_{A,B})}$ is defined to be $\{(f_A(u_1,a,c),\ldots,f_A(u_m,a,c))\mid (a,c)\in A^I\}$. Otherwise, $R^+(u_1,\ldots,u_m)$ is in $\calR_B^+$, and $(R^+)^{\sigma(I_{A,B})}=\{(f_B(v_1,b,c),\ldots,f_B(v_p,b,c))\mid (c,b)\in B^I\}$.
	The mapping $\tau$ is defined as the projection onto the variables $x$ and $y$.
	Note that the instance can be constructed in linear time, and the projection can be computed in constant time. 
	
	We now claim that $\sigma(I_{A,B})$ is a database over the schema $(\calR^+,\dltplus)$, as
	all the FDs of $\dltplus$ are satisfied. Let $\delta = R_j^+\colon U\rightarrow v\in\dltplus$.
	If $v\notin V_x\cup V_y\cup V_z$,
	then $\delta$ holds as $v$ is assigned the value $\bot$ in every tuple in $(R_j^+)^{\sigma(I_{A,B})}$.
	Next assume that $v\in V_{x}\setminus V_{z}$.
	By point~\ref{lemmaitem:leftsideconsistent} of Lemma~\ref{lemma:reductionsets},
	there is some $u\in U$ such that $u\in V_x$. Thus in every tuple in $(R_j^+)^{\sigma(I_{A,B})}$, 
	if $v$ is assigned the value $a$, then $u$ is either assigned the value $a$ or $(a,c)$ for some $c\in\{1,\ldots,n\}$
	and in either case $\delta$ is satisfied. The proof for the cases where $v\in V_{z}\setminus (V_{x}\cup V_{y})$ and $v\in V_{y}\setminus V_{z}$ is similar.
	Next assume that $v\in V_{x}\cap V_{z}$.
	By point~\ref{lemmaitem:leftsideconsistent} of Lemma~\ref{lemma:reductionsets},
	there are some $u_1,u_2\in U$ such that $u_1\in V_{x}$ and $u_2\in V_{z}$ and for every tuple in $(R_j^+)^{\sigma(I_{A,B})}$,  if $v$ is assigned the value $(a,c)$, then $u_1$ is either assigned the value $a$ or $(a,c)$, $u_2$ is either assigned the value $c$ or $(a,c)$, and so $\delta$ is satisfied.
	The case $v\in V_{y}\cap V_{z}$ is similar. Note that the case where $v\in V_x\cap V_y$ cannot occur due to point~\ref{lemmaitem:disjointVxVy}
	of Lemma~\ref{lemma:reductionsets}.
	
	The structure of the head-path $(x,z_1\ldots,z_k,y)$
	guarantees that all answers to $\qEnum{Q^+}{\dltplus}$ correspond to those of $\qEnum{\Pi}{\emptyset}$ and vice versa.
	Indeed, let $\mu|_{\{x,y\}}	\in\Pi(I_{A,B})$.
\change{We have that $(\mu(x),\mu(z))\in A^I$ and $(\mu(z),\mu(y))\in B^I$.}	
	We define $\nu_\mu:\var(Q^+)\rightarrow\calD$ as follows.
	\begin{equation*}
	\nu_\mu(v)= \left\{
	\begin{array}{ll}
	\mu(x) & : v\in V_x\setminus V_z,\\
	\mu(y) & : v\in V_y\setminus V_z,\\
	\mu(z) & : v\in V_z\setminus (V_x\cup V_y),\\
	(\mu(x),\mu(z)) & : v\in V_z\cap V_x,\\
	(\mu(y),\mu(z)) & : v\in V_z\cap V_y,\\
	\bot & : \text{otherwise}.
	\end{array}  
	\right.
	\end{equation*}
	By definition, $f_A(v,\mu(x),\mu(z))=\nu_\mu(v)$ and $f_B(v,\mu(y),\mu(z))=\nu_\mu(v)$.
	As $\mu|_{\{x,y\}}\in\Pi(I_{A,B})$, we have that $(\mu(x),\mu(z))\in A^I$ and $(\mu(z),\mu(y))\in B^I$.
	Consider any atom $R^+(u_1,\ldots,u_m)$ of $Q^+$. If $R^+(u_1,\ldots,u_m)\in \calR_A^+$, then $\nu_\mu(u_1,\ldots,u_m)=(f_A(u_1,\mu(x),\mu(z)),\ldots,f_A(u_m,\mu(x),\mu(z)))\in (R^+)^{\sigma(I_{A,B})}$.
	If $R^+(u_1,\ldots,u_m)\in \calR_B^+$, we have that $\var(R^+)\cap V_x=\emptyset$ by point~\ref{lemmaitem:disjointVxVy} of Lemma~\ref{lemma:reductionsets}. Then, 	$\nu_\mu(u_1,\ldots,u_m)=(f_B(u_1,\mu(z),\mu(y)),\ldots,f_B(u_m,\mu(z),\mu(y)))$ is in $(R^+)^{\sigma(I_{A,B})}$.
	Therefore, $\nu_\mu|_{\free(Q^+)}\in Q^+(\sigma(I_{A,B}))$.
	Since $x\in V_x\setminus V_z$ and $y\in V_y\setminus V_z$, we have that $\nu_\mu(x)=\mu(x)$ and $\nu_\mu(y)=\mu(y)$, and so $\tau(\nu_\mu)=\mu|_{\{x,y\}}$.
	Moreover, any answer $\mu'|_{\free(Q^+)}\in Q^+(\sigma(I_{A,B}))$ that has $\tau(\mu'|_{\free(Q^+)})=\mu|_{\{x,y\}}$ assigns $\mu(x)$ to variables in $V_x\setminus V_z$, assigns $\mu(y)$ to variables in $V_y \setminus V_z$ and assigns $\bot$ to variables not in $V_x\cup V_y\cup V_z$. By points \ref{lemmaitem:disjointVz} and \ref{lemmaitem:disjointVxVy} of Lemma~\ref{lemma:reductionsets}, $V_z\cap \free(Q^+)=\emptyset$ and $V_x\cap V_y=\emptyset$, so these are the only variables in $\free(Q^+)$. Therefore $\mu'|_{\free(Q^+)}=\nu_\mu|_{\free(Q^+)}$.
	
	Next assume that $\mu'|_{\free(Q^+)}\in Q^+(\sigma(I_{A,B}))$. Let $R_x^+$ be an atom containing $x$ and $z_1$ (such an atom exists by the definition
	of the head-path).
	By point~\ref{lemmaitem:disjointVxVy} of Lemma~\ref{lemma:reductionsets} we have $\var(R^+)\cap V_y=\emptyset$, and $R^+\in \calR_A^+$.
	By points \ref{lemmaitem:xyzinVxVyVz} and \ref{lemmaitem:disjointVz} of Lemma~\ref{lemma:reductionsets} and since $x$ is a free variable, we have $x\in V_x\setminus V_z$ and $z_1\in V_z$.
	Thus there exists some $(a,c)\in A^I$ such that $\mu'(x)=f_A(x,a,c)=a$ and $\mu'(z_1)=f_A(z_1,a,c)\in\{a,(a,c)\}$.
	Similarly, there exists some $(c',b)\in B^I$ such that $\mu'(y)=f_B(y,b,c')=b$ and $\mu'(z_k)=f_B(z_k,b,c')\in\{c',(b,c')\}$.
	It remains to show that $c=c'$.
	We show by induction on $i$ that $\mu'(z_i)$ is either $c$ or of the form $(t,c)$ with some value $t$.
	We know this fact for $i=1$ since this is how we define $c$.
	If $k>1$, then consider an atom $R_i^+(\vec{v}_i)$ containing $\{z_{i-1},z_i\}$. Then $\mu'$ maps $\vec{v}_i$ to some tuple $t\in R_i^+$ that assigns $z_{i-1}$ with a value of the form $c$ or $(t,c)$. Since $z_i \in V_z$, $t$ also assigns $z_i$ with a value of such a form, so $\mu'(z_i)$ is of the form $c$ or $(t,c)$ too. This show that $c=c'$.
	Therefore, $\tau(\mu')\in\Pi(I_{A,B})$. Moreover, since $\tau$ is simply a projection, $\tau(\mu')$ is uniquely defined.
\end{proof}

By combining Theorem~\ref{thm:qpEquiv} and Lemma~\ref{lemma:acyclicgeneralfdnegative}, we have the exact reduction $\qEnum{\Pi}{\emptyset}\leq_e\qEnum{Q^+}{\dltplus}\leq_e\qEnum{Q}{\Delta}$. Therefore having $\qEnum{Q}{\Delta}$ in $\DelayClin$ would mean that $\qEnum{\Pi}{\emptyset}\in\DelayClin$, which contradicts the conjecture of the lower bound for matrix multiplication. This concludes the proof of Theorem~\ref{thm:acyclicDichotomy}.
Note that Theorem~\ref{thm:acyclicDichotomy} does not contradict the dichotomy of Theorem~\ref{theorem:originalDichotomy}: if for a given query $Q$ we have that $Q^+$ is acyclic but not free-connex, then $Q$ is not free-connex by Proposition~\ref{prop:plusstructure}.

\section{Cyclic CQs} \label{sec:cyclic}

In the previous section, we established a classification of FD-acyclic CQs, but we did not
consider FD-cyclic queries. A known result states that, under certain assumptions, self-join free cyclic
queries are not in $\DelayClin$~\cite{bb:thesis}. In this section, we therefore explore how
FD-extensions can be used to obtain some insight on the implications of this result in the
presence of FDs. We show that (under the same assumptions) self-join free FD-cyclic queries that contain
only unary FDs cannot be evaluated in linear time. For schemas containing only unary FDs,
this extends the dichotomy presented in the previous section to all CQs,
and also proves a dichotomy for the queries that can be enumerated in linear delay.
We will prove the following theorem using the $\tetra{k}$ problem, which is defined below.

\begin{theorem}\label{thm:cyclicHardness}
	Let $Q$ be a self-join free CQ over a schema $\calS=(\calR,\Delta)$, where $\Delta$ only contains unary FDs. If $Q$ is FD-cyclic, then $\qDecide{Q}{\Delta}$ cannot be solved in linear time,
	assuming that the $\tetra{k}$ problem cannot be solved in linear time for any $k\geq3$.
\end{theorem}

As before, the initial hardness proof for cyclic queries no longer holds in the presence of FDs, and we present a modified reduction that satisfies the FDs.
We start by describing the hypothesis used to obtain the conditional lower bounds.
$\tetra{k}$ is defined as the hypergraph with the vertices $\{1,\ldots,k\}$ and the edges $\{\{1,\ldots,k\}\setminus\{i\}\mid i\in\{1,\ldots,k\}\}$. 
Let $\calH$ be a hypergraph. With a slight abuse of notation, we also denote by $\tetra{k}$ the 
decision problem of whether $\calH$ contains a subhypergraph isomorphic to $\tetra{k}$.

$\tetra{3}$ is the problem of deciding whether a graph contains a triangle, 
and it is strongly believed not to be solvable within time linear in the size of the graph \cite{DBLP:conf/focs/WilliamsW10}.
The generalization of this assumption is that $\tetra{k}$ cannot be solved in time linear in the size of the graph for any $k\geq 3$. This assumption is stronger than the one used in Section~\ref{sec:Positive}, as $\tetra{3}$ can be reduced to the matrix multiplication problem~\cite{DBLP:conf/focs/WilliamsW10}.
However, we do have reasons to also believe the $\tetra{k}$ assumption for any $k\geq 4$.
The $(\ell,k)$-Hyperclique Hypothesis~\cite{DBLP:conf/soda/LincolnWW18} states that, in a $k$-uniform hypergraph of $n$ vertices, $n^{k-o(1)}$ time is required to find a set of $\ell$ vertices such that each of its subsets of size $k$ forms a hyperedge.
Solving $\tetra{k}$ in linear time would contradict the $(\ell,k)$-Hyperclique Hypothesis. Indeed, an algorithm that decides whether a hypergraph contains a $\tetra{k}$ in linear time runs in at most $n^{k-1}$ many steps, and thus
can also detect a $(k-1)$-hyperclique in a $k$-uniform graph in time $n^{k-1}$, which is less than $n^{k-o(1)}$.

We will show that if $Q^+$ is cyclic and only unary FDs are present,
the problem $\tetra{k}$ for some $k$ can be reduced to $\qDecide{Q^+}{\dltplus}$. In the following definition, $\calH^b$ is a pseudo-minor isomorphic to some $\tetra{k}$. To perform the said reduction, we will use this pseudo-minor on a graph describing our query.

\begin{definition}\label{def:tetpm}
	Let $\calH$ be a cyclic hypergraph. We denote by $\tetpm{\calH}$ the pairs of pseudo-minors $(\calH^a,\calH^b)$ of $\calH$ such that:
	\begin{enumerate}
	\item $\calH^a$ is obtained by a (possibly empty) set of vertex removal and edge removal operations on $\calH$.
	\item $\calH^b$ is obtained by a (possibly empty) set of edge contraction and edge removal operations on $\calH^a$.
	\item $\calH^b$ is isomorphic to $\tetra{k}$ for some $k\geq 3$.
	\item Either $\calH^a=\calH^b$ or $\calH^a$ is a chordless cycle.
	\end{enumerate}
	Given a query $Q$, we define $\tetpm{Q}=\tetpm{\calH(Q)}$.
\end{definition}

Brault-Baron~\cite[Theorem 11]{bb:thesis} showed that a cyclic hypergraph $\calH$ admits some $\tetra{k}$ as a pseudo-minor. We describe the proof here in our terminology.

\begin{lemma}[\cite{bb:thesis}, Theorem 11]\label{lemma:acyclicEquivalence}
	Let $\calH$ be a hypergraph. If $\calH$ is cyclic, then $\tetpm{\calH}$ is non-empty.
\end{lemma}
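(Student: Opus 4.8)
The plan is to invoke the classical characterization cited in the excerpt (\cite{DBLP:journals/jacm/BeeriFMY83}): a hypergraph is acyclic iff it is conformal and has no chordless cycle (of length at least four). Since $\calH$ is cyclic, it must therefore either fail conformality or contain such a chordless cycle, and I would split into exactly these two cases, mirroring the two constructions listed in Definition~\ref{def:tetpm}.

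\emph{Case A: $\calH$ is not conformal.} Then some clique $K$ of $\calH$ is contained in no hyperedge; I would pick such a ``bad'' clique $K=\{v_1,\dots,v_k\}$ of minimum size. A clique of size at most two is always inside an edge (a single vertex lies in some edge, and two neighbours share an edge by definition), so $k\ge 3$. By minimality, each proper subset $K\setminus\{v_i\}$ is \emph{not} bad, so there is an edge $e_i\supseteq K\setminus\{v_i\}$; moreover $v_i\notin e_i$, since otherwise $e_i\supseteq K$, contradicting badness of $K$. Now apply vertex removal to delete every vertex outside $K$: each $e_i$ restricts to exactly $K\setminus\{v_i\}$, and every other edge restricts to a proper subset of $K$ (it cannot become all of $K$, again by badness), hence is contained in some $K\setminus\{v_j\}$ and can be deleted by edge removal. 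What survives is the hypergraph on vertex set $K$ with edge set $\{K\setminus\{v_i\}\mid 1\le i\le k\}$, i.e.\ exactly $\tetra{k}$ with $k\ge 3$ — the pseudo-minor of item~1 of Definition~\ref{def:tetpm}.

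\emph{Case B: $\calH$ is conformal but has a chordless cycle.} Pick one, $(x_1,\dots,x_n)$, with $n\ge 4$ (a length-three chordless cycle is a $3$-clique, which conformality would place inside an edge). First use vertex removal to delete every vertex not among $x_1,\dots,x_n$. By the defining property of a chordless cycle, two of these vertices are neighbours only if consecutive, so for $n\ge 4$ every surviving edge restricts to a single vertex or to a single consecutive pair $\{x_i,x_{i+1}\}$, and all $n$ consecutive pairs do occur; deleting the singletons and duplicates by edge removal leaves the ``pure cycle'' $C_n$ whose edges are exactly the $n$ consecutive pairs. I would then repeatedly shrink $C_m$ to $C_{m-1}$ for $m\ge 4$: contract $x_3$ into $x_2$ (legal, as $\{x_2,x_3\}$ is an edge), which turns $\{x_2,x_3\}$ into the singleton $\{x_2\}$ and $\{x_3,x_4\}$ into $\{x_2,x_4\}$; removing $\{x_2\}\subseteq\{x_1,x_2\}$ by edge removal yields $C_{m-1}$ on $x_1,x_2,x_4,\dots,x_m$. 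Iterating down to $C_3=\tetra{3}$ gives the pseudo-minor of item~2.

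The bulk of the work is routine bookkeeping: checking that every step is a legal pseudo-minor operation and that the stated edge sets are precisely what remains after the edge removals. I expect the only genuinely delicate points to be in Case~B: justifying that the chordless cycle may be taken of length at least four (so the contraction process terminates at an honest triangle rather than collapsing an edge of a ``triangle'' that was already inside a hyperedge), and confirming that each contraction step preserves a pure cycle — which is immediate here, since after discarding the singleton the remaining edges are literally the consecutive pairs of a shorter cycle, so no chord can appear.
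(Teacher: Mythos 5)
Your proof is correct and follows essentially the same route as the paper's: both split via the Beeri et al.\ characterization into the non-conformal case (minimal bad clique $\Rightarrow \tetra{k}$ by vertex and edge removals) and the chordless-cycle case (restrict to the cycle, then contract down to $\tetra{3}$). Your write-up is merely more explicit about the bookkeeping (e.g.\ why the cycle can be taken of length at least four, and why surviving edges meet the cycle in at most two vertices), which the paper leaves implicit.
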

\begin{proof}
	If $\calH$ has a chordless cycle $C$ as an induced subgraph, then removing vertices not in $C$ followed by performing all possible edge removals results in a chordless cycle $\calH^a$. Then, $\calH^b$ isomorphic to $\tetra{3}$ is obtained by a repeated use of edge-contraction followed by performing all possible edge removals. In this case, $(\calH^a,\calH^b)\in\tetpm{\calH}$.
	If $\calH$ does not contain a chordless cycle, since it is not acyclic, it is non-conformal.
	Consider its smallest non-conformal clique. The clique is not contained in any edge (since it is non-conformal), and it is a $\tetra{k}$ because of its minimality. Therefore, removing all vertices other than the clique, and then performing all possible edge removals, results in a graph $\calH^a=\calH^b$ isomorphic to some $\tetra{k}$. Again, $(\calH^a,\calH^b)\in\tetpm{\calH}$.
\end{proof}

For the reduction we present next, we first need to show that for an FD-cyclic query $Q$, no pseudo-minor in $\tetpm{Q^+}$ contains all variables of any FD $X\rightarrow y$.
In the following we assume that $\Delta$ only contains non-trivial FDs, meaning $y\notin X$.

\begin{lemma}\label{lemma:fd-pm}
	Let $Q$ be a self-join free FD-cyclic CQ over a schema $\calS=(\calR,\Delta)$. Let $(\calH^a,\calH^b)\in\tetpm{Q^+}$ and 
	$\calH^a=(V,E)$. For every non-trivial $X\rightarrow y\in\dltplus$, we have $X\cup\{y\}\not\subseteq V$.
\end{lemma}

\begin{proof}
	We start with an observation regarding the FDs.
	Note that in $\calH(Q^+)$ some edge contains the vertices $X\cup\{y\}$, and by the construction of $Q^+$, every edge that contains $X$ must also contain $y$. 
	These properties still hold after any sequence of vertex removals and edge removals as long as none of the vertices $X\cup\{y\}$ are removed.
	Therefore if none of the vertices $X\cup\{y\}$ were removed, there must be an edge in $\calH^a$ containing $X\cup\{y\}$, and every edge in $\calH^a$ containing $X$ also contains $y$.
	
	We distinguish two cases. In the first case, $\calH^b=\calH^a$ is a $\tetra{k}$ obtained from $\calH(Q^+)$ by a sequence of vertex and edge removals. If $X\cup\{y\}\subseteq V$, then by the definition of $\tetra{k}$ it should contain the edge $V\setminus \{y\}$. Such an edge cannot exist since it contains all of $X$ but not $y$. Therefore, such a $\tetra{k}$ cannot contain all of $X\cup\{y\}$, and in this case we conclude that $X\cup\{y\}\not\subseteq V$.
	In the second case, $\calH^a$ is a cycle, and $\calH^b$ is a $\tetra{3}$ obtained by performing edge contraction steps on it. If none of $X\cup\{y\}$ were removed, some edge of $\calH^a$ must contain all of them. Since all edges are of size $2$ it must be that $|X|=1$. Denote $X=\{x\}$. Since we consider a cycle containing both $x$ and $y$, there should be at least one edge containing $x$ but not containing $y$. Since we showed it is not possible, such a cycle cannot contain all of $X\cup\{y\}$.
\end{proof}

We are now ready to establish the reduction. 
Given a pseudo-minor of $\tetpm{Q^+}$ isomorphic to some $\tetra{k}$, we can reduce the problem of checking whether a hypergraph contains a subhypergraph 
isomorphic to $\tetra{k}$ to finding a boolean answer to $Q^+$.

\begin{lemma}\label{lemma:simple-pm}
	Let $Q$ be a self-join free FD-cyclic CQ over a schema $\calS=(\calR,\Delta)$, where $\Delta$ only contains unary FDs. 
	Let $(\calH^a,\calH^b)\in\tetpm{Q^+}$ such that $\calH^b$ is isomorphic to $\tetra{k}$ for some $k$. Then, there is a linear time reduction $\tetra{k}\leq_m \qDecide{Q^+}{\dltplus}$.
\end{lemma}

\begin{proof}
	Given an input hypergraph $\calG$ for the $\tetra{k}$ problem, we define an instance $I$ of $\qDecide{Q^+}{\dltplus}$.
	We consider a sequence of pseudo-minors $\calH(Q^+)=\calH_1,\calH_2,\ldots,\calH_m=\calH^b$, each pseudo-minor is obtained by performing one operation over the previous one, where $\calH_j=\calH^a$ for some $1\leq j\leq m$.
	We treat hypergraphs as describing CQs. That is, to the hypergraph $\calH_i$ we associate a query $Q_i$ such that $\calH(Q_i)=\calH_i$. Every edge $e$ of $\calH_i$ corresponds to an atom in $Q_i$ with a relational symbol $R^i_e$, and the vertices of $e$ are its variables. We assume that the variables in every atom are sorted by some total order. In the following, we construct instances $I_i$ to these queries. 
	It is possible to define a instance $I_m$ such that deciding $Q_m(I_m)$ solves $\tetra{k}$~\cite[Lemma 20]{bb:thesis}. We describe how to inductively build an instance $I_1=I$ such that deciding $Q^+(I)$ solves the same problem.
	
	\paragraph{Constructing $I$.}
	We first define $I_m$. For every edge $e$ of $\calH_{m}$, we define a relation
	$R_e^{m}$ that contains all edges of $\calG$ that have the same size as $e$. 
	A tuple of $R_e^{m}$ consists of the vertices of such an edge sorted by some total order on the vertices of $\calG$.
	We now define $I_i$ given $I_{i+1}$. We distinguish three cases according to the type of pseudo-minor operation that leads from $\calH_i$ to $\calH_{i+1}$.
	\begin{itemize}
		\item {\em edge removal}:
		For every $e''\in \calH_{i+1}$, set $R_{e''}^i=R_{e''}^{i+1}$.
		Then, let $e$ be the edge removed, and let $e'$ be an edge containing it.
		Set $R_e^{i}$ to be a copy of $R_{e'}^{i+1}$ projected accordingly.
		\item {\em edge contraction}: Let $v$ be the vertex replaced by its neighbor $u$.
		For any edge $e\in\calH_{i}$ contracting to an edge $e'\in\calH_{i+1}$, set $R_{e}^{i}$ to be a copy of $R_{e'}^{i+1}$, and assign the attribute $v$ a copy of the value of $u$ in every tuple. Then, if $u\not\in e$, project $u$ out of $R_{e}^{i}$.
		For every other edge $e''\in \calH_{i}$, set	 $R_{e''}^i=R_{e''}^{i+1}$.
		\item {\em vertex removal}: Let $v$ be the vertex removed, and let 
		$e\in\calH_{i}$ be an edge containing $v$ resulting in an edge $e'\in \calH_{i+1}$.
		Expand $R_{e'}^{i+1}$ to $R_{e}^{i}$ by copying $R_{e'}^{i+1}$, and assign $v$ with a constant $\bot$ in every tuple. Next apply the following FD-correction steps on $v$:
		\begin{enumerate}
			\item In every tuple, concatenate to the value of $v$ the values of variables it implies. These variables are defined via fixpoint iteration
as
\begin{align*}
\text{ImpliedBy}(v)&=\{v\}\text{ and}\\
\text{ImpliedBy}(v)&=\{w\mid t\rightarrow w\in\dltplus,t\in\text{ImpliedBy}(v)\}\cup \text{ImpliedBy}(v).
\end{align*}

For each $w\in\text{ImpliedBy}(v)\setminus\{v\}$, if $R^i(\vec{u})$ is an atom such that $\vec{u}[k]=v$ and $\vec{u}[j]=w$, then in every tuple $t\in R^i$, replace $t[k]$ with $(t[k],t[j])$.
			\item After the value of $v$ is determined, concatenate the value of $v$ to the variables implying it. These variable are defined by  
			\begin{align*}
	\text{Implies}(v)&=\{v\}\text{ and}\\
\text{Implies}(v)&=\{u\mid u\rightarrow t\in\dltplus,t\in\text{Implies}(v)\}
\cup \text{Implies}(v).			
			\end{align*}
			For each variable $u\in\text{Implies}(v)\setminus\{v\}$, if $R^i(\vec{u})$ is an atom such that $\vec{u}[k]=v$ and $\vec{u}[j]=u$, then in every tuple $t\in R^i$, replace $t[j]$ with $(t[j],t[k])$.
		\end{enumerate}
		For every edge $e''\in\calH_i$ not containing $v$, set $R_{e''}^i=R_{e''}^{i+1}$.
	\end{itemize}
	The overall construction of the instance $I$ can be done in linear time, since there is a constant number of pseudo-minor operations, each requiring a linear number of computational steps.
	
	\paragraph{$I$ is an instance of $\calS$. }
	We show that $I$ satisfies the FDs in $\dltplus$ by induction: we claim that for each $\calH_i$ all FDs $x\rightarrow y$ such that $x,y\in V(\calH_i)$ are satisfied.
	According to Lemma~\ref{lemma:fd-pm}, $\calH^a$ and therefore all of $\calH_j,\ldots,\calH_m$ do not contain all variables of any FD. Therefore our claim trivially holds for $\calH_j,\ldots,\calH_m$.
	We now prove our claim for $\calH_i$ where $i\leq j-1$. Consider an FD $\delta = x \rightarrow y$ such that $x,y\in V(\calH_i)$. There are three cases:
	\begin{itemize}
		\item If $x,y\in V(\calH_{i+1})$, then by the induction assumption $\delta$ is satisfied in $\calH_{i+1}$. If $\calH_{i+1}$ is obtained by edge removal, then the only new relation in $\calH_i$ is a projection of a relation of $\calH_{i+1}$, and therefore $\delta$ is satisfied in all relations. Otherwise, $\calH_{i+1}$ is obtained by vertex removal.
		If the value of $y$ is the same in $R_e^{i}$ as in $R_e^{i+1}$, we are done by the induction assumption. Otherwise, $y$ is changed due to the second FD-correction step, and the vertex removed is some $z$ such that $y \rightarrow z$.
		In this case, since $x$ transitively implies $z$, both $x$ and $y$ are concatenated with the same values, and $\delta$ is still satisfied.
		\item If $x\not\in V(\calH_{i+1})$, then $\calH_{i+1}$ is obtained by the removal of the vertex $x$, and the first FD-correction step ensures that $x$ contains a copy of the values of $y$ in every tuple where they both appear. Therefore $\delta$ is satisfied. 
		\item If $y\not\in V(\calH_{i+1})$, then $\calH_{i+1}$ is obtained by the removal of the vertex $y$. The second FD-correction step ensures that $x$ contains a copy of the values of $y$, and $\delta$ is satisfied.
	\end{itemize}
	
	\paragraph{Correctness. }
	We know~\cite[Lemma 20]{bb:thesis} that there is a solution to $Q_m(I_m)$ iff there exists a subhypergraph of $\calG$ isomorphic to $\calH_{t}$, and in fact every mapping $\mu$ that can be used for the evaluation corresponds to such a subhypergraph.
	We claim that every mapping used for evaluating $Q_{i+1}(I_{i+1})$ corresponds to a mapping that can be used for $Q_{i}(I_{i})$, and vice versa.
	This was already shown in case $\calH_{i+1}$ is obtained by $\calH_{i}$ via edge contraction \cite[Lemma 15]{bb:thesis} or edge removal \cite[Lemma 14]{bb:thesis},
	and it was shown for vertex removal~\cite[Lemma 13]{bb:thesis} if we simply assign the new vertex with a constant and skip the FD-correction steps.
	Let $\calH_{i+1}$ be a pseudo-minor obtained from $\calH_i$ via vertex removal, and denote by $I_i^0$ the instance constructed from $I_{i+1}$ as described but without the FD-correction steps. It is left to show that a mapping $\mu^0$ that satisfies $Q_i(I_i^0)$ corresponds to a mapping $\mu$ that satisfies $Q_i(I_i)$, and vice versa. This will conclude that $\calG$ has a subhypergraph isomorphic to $\calH_{t}$ iff $Q^+(I)\neq \emptyset$.
	
First we claim that if $x$ implies $y$ and \change{both $x$ and $y$ are present in $\calH_i$ (that is, for all $1\leq j < i$ the operation between $\calH_j$ and $\calH_{j+1}$ is not the removal of $x$ or $y$)}, then $y$ appears in every atom containing $x$ in $\calH_i$.
	This will help us show that we perform the same changes over $x$ in all relations. Since $Q^+$ is an FD-extension and only unary FDs are present, we are guaranteed that if $x$ implies $y$, then $y$ is present in every edge of $\calH(Q^+)=\calH_1$ where $x$ appears. This property is preserved under vertex removal and edge removal operations (as long as $x$ and $y$ are not removed), which are
	the only operations that can be performed between $\calH_1$ and $\calH_i$
	\change{(this is true since we perform vertex removal on $\calH_i$ and, by Definition~\ref{def:tetpm}, vertex-removal operations only occur between $\calH_1$ and $\calH^a$, and edge-removal is the only other operation allowed there).}
	
	We now show that given $\mu^0$ that satisfies $Q_i(I_i^0)$ there is a mapping $\mu$ that satisfies $Q_i(I_i)$, and vice versa. We show this by induction, considering one FD-correction step involving one variable at a time.
	Consider an FD-correction step of the first type on a vertex $v$ implying $w$, \change{where both $v$ and $w$ are present in $\calH_i$}. 
	In any atom $R(\vec{u})$ such that $\vec{u}[k]=v$, we showed that there exists an index $j$ such that $\vec{u}[j]=w$. For every tuple $t_0\in R^0$, there is a similar tuple $t\in R$ with the only difference being $t[k]=(t_0[k],t_0[j])$.
	Therefore, by defining $\mu(v)=(\mu_0(v),\mu_0(w))$ and $\mu(u)=\mu_0(u)$ for all other variables $u\neq v$, every tuple that is used in the evaluation of $\mu^0$ in $I_i^0$ results in a tuple that can be used in the evaluation of $\mu$ in $I_i$. Indeed, $\mu$ is a valid evaluation of $Q_i(I_i)$.
	A similar argument holds similarly for the opposite direction and for the second type of FD-correction step.
	For the opposite direction for example, if $\mu(v)=(a_v,a_w)$, we define $\mu^0(v)=a_v$.
\end{proof}

Theorem~\ref{thm:cyclicHardness} is an immediate consequence of Lemma~\ref{lemma:simple-pm}:

\begin{proof}[Proof of Theorem~\ref{thm:cyclicHardness}] For the sake of a contradiction assume that $Q$ is FD-cyclic, and $\qDecide{Q}{\Delta}$ is solvable in linear time. Theorem~\ref{thm:qpEquiv} implies a linear time reduction $\qDecide{Q^+}{\dltplus}\leq_m\qDecide{Q}{\Delta}$. Therefore, it is possible to solve $\qDecide{Q^+}{\dltplus}$ in linear time as well. As $Q^+$ is cyclic, there exists a pseudo-minor $\calH_{pm}\in\tetpm{Q^+}$ isomorphic to $\tetra{k}$ for some $k\geq3$. According to Lemma~\ref{lemma:simple-pm}, this $\tetra{k}$ problem is also solvable in linear time.
\end{proof}

In terms of enumeration complexity, Theorem~\ref{thm:cyclicHardness} means that any enumeration algorithm for such a query cannot output a first solution (or decide that there is none) within linear time, and we get the following corollary. 

\begin{corollary}\label{cor:cyclicHardness}
	Let $Q$ be a self-join free CQ over a schema $\calS=(\calR,\Delta)$, where $\Delta$ only contains unary FDs. If $Q$ is FD-cyclic, then $\qEnum{Q}{\Delta}\not\in\DelayClin$,
	assuming that the $\tetra{k}$ problem is not solvable in linear time for any $k$.
\end{corollary}

Less restrictive than constant delay enumeration, the class $\DelayLin$ consists of enumeration problems that can be solved with a linear delay between solutions.
Acyclic CQs are known to be in $\DelayLin$~\cite{bdg:dichotomy}, and Corollary~\ref{cor:posExtension} shows that FD-acyclic CQs are in this class as well.
Theorem~\ref{thm:cyclicHardness} implies a lower bound for $\DelayLin$ as well.
Thus, we obtain a dichotomy stating that CQs are in $\DelayLin$ iff they are FD-acyclic.

\begin{theorem}\label{theorem:lineardelaydichomtomy}
	Let $Q$ be a CQ with no self-joins over a schema $\calS=(\calR,\Delta)$, where $\Delta$ only contains unary FDs.
	\begin{itemize}
		\item If $Q$ is FD-acyclic, then $\qEnum{Q}{\Delta}\in\DelayLin$.
		\item Otherwise (if $Q$ is FD-cyclic), $\qEnum{Q}{\Delta}\not\in\DelayLin$, 
		assuming that the $\tetra{k}$ problem cannot be solved in linear time for any $k$.
	\end{itemize}
\end{theorem}

\begin{proof}
	If $Q^+$ is acyclic, then $\qEnum{Q^+}{\emptyset}\in\DelayLin$~\cite{bdg:dichotomy}.
	According to Theorem~\ref{thm:qpEquiv}, $\qEnum{Q}{\Delta}\leq_e\qEnum{Q^+}{\dltplus}$.
	Since every instance that satisfies $\dltplus$ also satisfies $\emptyset$, we conclude that $\qEnum{Q^+}{\dltplus}\leq_e\qEnum{Q^+}{\emptyset}$ using the identity mapping.
	Therefore, $\qEnum{Q}{\Delta}\in\DelayLin$, since $\DelayLin$ is closed under exact reductions.
	
	If $Q^+$ is cyclic, assume by contradiction that $\qEnum{Q}{\Delta}\in\DelayLin$. According to Corollary~\ref{cor:cyclicHardness}, $\qEnum{Q^+}{\dltplus}\in\DelayLin$ as well. This means that finding a first answer to $Q^+$ or deciding that there is none can be done in linear time, in contradiction to Theorem~\ref{thm:cyclicHardness}.
\end{proof}

We conclude this section with a short discussion about its extension to general FDs. The following example shows that the proof for Theorem~\ref{thm:cyclicHardness} that was provided here cannot be lifted
to general FDs.
Exploring this extension is left for future work.

\begin{figure}[t]
	\begin{center}
		\begin{tikzpicture}[ dot/.style={draw, circle, inner sep=1pt, fill=black}]
		
		\node[dot,label=above:\(y\)] (v1) at (0,0) {};
		\node[dot,label=above:\(v\)] (v2) at (3,0) {};
		\node[dot,label=above:\(z\)] (v3) at (6,0) {};
		\node[dot,label=above:\(w\)] (v4) at (4.5,1.5) {};
		\node[dot,label=above:\(x\)] (v5) at (3,3) {};
		\node[dot,label=above:\(u\)] (v6) at (1.5,1.5) {};
		
		\draw \convexpath{v1,v2,v3}{0.55cm};
		\draw \convexpath{v1,v6,v5}{0.55cm};
		\draw \convexpath{v5,v3,v4}{0.55cm};
		\draw \convexpath{v6,v4,v2}{0.55cm};
		
		\end{tikzpicture}
	\end{center}
	\caption{The hypergraph $\calH(Q^+)$ for Example~\ref{exp:generalCyclic}}
	\label{fig:cookie}
\end{figure}

\begin{example}\label{exp:generalCyclic}
	
	Consider $Q()\leftarrow R_1(x,y,u),R_2(x,w,z),R_3(y,v,z),R_4(u,v,w)$ over a schema with all possible two-to-one FDs in $R_1$, $R_2$ and $R_3$. That is,
\begin{align*}
 \Delta =\{& xy\rightarrow u, yu\rightarrow x, ux\rightarrow y,
	zy\rightarrow v,\\
	&yv\rightarrow z,
	 vz\rightarrow y,
	xz\rightarrow w, zw\rightarrow x, wx\rightarrow z\}.
\end{align*}	
	Note that $Q^+=Q$. 
	The hypergraph $\calH(Q^+)$ is cyclic, see Figure~\ref{fig:cookie}, yet it is unclear whether $Q$ can be solved in linear time, and whether $\tetra{3}$ can be reduced to answering $Q^+$.
	Using Lemma~\ref{lemma:fd-pm}, $\calH(Q^+)$ has triangle pseudo-minors that do not contain all variables of any FD. Consider for example the one obtained by removing all vertices other than $x,y,z$. A construction similar to that of Lemma~\ref{lemma:simple-pm} would assign $u$ with the values of $x$ and $y$, assign $v$ with the values of $y$ and $z$, and assign $w$ with the values of $x$ and $z$. This results in the edge $\{u,v,w\}$ containing all three values of any possible triangle, meaning that this edge cannot be constructed in linear time.
	Other choices of triangle pseudo-minors lead to similar encoding problems. 
	\qed
\end{example}

\section{Cardinality Dependencies}\label{section:cardinalitydep}

In this section, we show that our results also apply to CQs over schemas with cardinality dependencies.
{\em Cardinality Dependencies} (CDs)~\cite{DBLP:conf/icalp/ArapinisFG16,DBLP:journals/pvldb/CaoFWY14} are a generalization of FDs, where the left-hand side does not uniquely determine the right-hand side,
but rather provides a bound on the number of distinct values it can have.
Formally, $\Delta$ is the set of $CDs$ of a schema $\calS=(\calR,\Delta)$. Every $\delta\in\Delta$ has the form $(R_i\colon A\rightarrow B,c)$, where $R_i\colon A\rightarrow B$ is an FD and $c$ is a positive integer.
A CD $\delta$ is {\em satisfied} by an
instance $I$ over $\calS$, if 
every set of tuples $S\subseteq(R_i)^I$ that agrees on the indices of $A$, but no pair of them agrees on all indices of $B$, contains at most $c$ tuples.
It follows from the definition that $\delta$ is an FD if $c=1$.

Denote by $\Delta^\text{FD}$ the FDs obtained from a set of CDs $\Delta$ by setting all $c$ values to one.
Given a query $Q$ over $\calS=(\calR,\Delta)$, we define the {\em CD-extended query} $Q^+$ of $Q$ to be the FD-extended query of $Q$ over $\calS=(\calR,\Delta^\text{FD})$.
The schema $\calS^+$ is defined with the original $c$ values, and the extended CDs are $\dltplus=\{(R_i^+\colon A\rightarrow b,c)\mid \exists(R_j\colon A\rightarrow B,c)\in \Delta, b\in B, A\cup\{b\}\subseteq\var(R_i^+)\}$.
Note that FD-extensions are indeed a special case of CD-extensions.

The hardness results extend to CDs because they are not more restrictive than FDs: 
Since every instance that preserves the FDs $\Delta^\text{FD}$ also preserves the CDs $\Delta$, we conclude that $\qEnum{Q}{\Delta^{\text{FD}}}\leq_e\qEnum{Q}{\Delta}$.
When only FDs are present we can apply Theorem~\ref{thm:qpEquiv}, and get $\qEnum{Q^+}{\dltplus^{\text{FD}}}\leq_e\qEnum{Q}{\Delta^{\text{FD}}}$. Combining the two we get the following lemma.

\begin{lemma}\label{lemma:negCD}
	Let $Q$ be a \change{self-join free} CQ over a schema $\calS=(\calR,\Delta)$, where $\Delta$ is a set of CDs,
	and let $Q^+$ be its CD-extension. Then
	$\qEnum{Q^+}{\dltplus^{\text{FD}}}\leq_e\qEnum{Q}{\Delta}$.
\end{lemma}

Defining the classes of {\em CD-acyclic} and {\em CD-free-connex} queries similarly to the case with FDs, Lemma~\ref{lemma:negCD} implies that all lower bounds presented in this paper hold for CDs.
If $Q$ is self-join free and CD-acyclic but not CD-free-connex and $\qEnum{Q}{\Delta}\in\DelayClin$, then by Lemma~\ref{lemma:negCD} we have that $\qEnum{Q^+}{\dltplus^\text{FD}}\in\DelayClin$ as well. According to Lemma~\ref{lemma:acyclicgeneralfdnegative} this means that $\qEnum{\Pi}{\emptyset}\in\DelayClin$, and the matrix multiplication problem can be solved in quadratic time. So $\qEnum{Q}{\Delta}\not\in\DelayClin$, assuming the boolean matrix multiplication assumption.
Similarly, we conclude the hardness of self-join free CD-cyclic CQs over schemas that contain only unary CDs, of the form $(A\rightarrow B,c)$ with $|A|=1$.  Combining Lemma~\ref{lemma:negCD} with Theorem~\ref{thm:cyclicHardness}, we have that such queries cannot be evaluated in linear time, assuming that the $\tetra{k}$ problem cannot be solved in linear time for any $k$.

In order to extend the positive results,
we need to show that the CD-extension is at least as hard as the original query w.r.t.~enumeration. We use a slight relaxation of exact reductions: For $\pEnum{R_1}\leq_{e'}\pEnum{R_2}$, instead of a bijection between the sets of outputs, one output of $\pEnum{R_1}$ corresponds to at most a constant number of outputs of $\pEnum{R_2}$.

\begin{lemma}\label{lemma:posCD}
	Let $Q$ be a \change{self-join free} CQ over a schema $\calS=(\calR,\Delta)$, where $\Delta$ is a set of CDs,
	and let $Q^+$ be its CD-extension. Then
	$\qEnum{Q}{\Delta}\leq_{e'}\qEnum{Q^+}{\dltplus}$.
\end{lemma}

\begin{proof}
	When dealing with FDs, we assume that the right-hand side has only one variable, as we can use such FDs to describe all possible ones. With CDs this no longer holds. Nonetheless, every instance of the schema $\calS=(\calR,\Delta)$ satisfies $\Delta^1=\{(R_i\colon A\rightarrow b,c)\mid (R_i\colon A\rightarrow B,c)\in\Delta,b\in{B}\}$, so is also an instance of $\calS^1=(\calR,\Delta^1)$. Therefore, $\qEnum{Q}{\Delta}\leq_e\qEnum{Q}{\Delta^1}$ using the identity mapping.
	It is left to show that $\qEnum{Q}{\Delta^1}\leq_{e'}\qEnum{Q^+}{\dltplus}$.
	The proof idea is the same as in Theorem~\ref{thm:qpEquiv}, except now, for each tuple extended from $R^I_i$ to $R^{I^+}_i$ we can have at most $c$ new tuples. Since this process is only done a constant number of times, the construction still only requires linear time, and the rest of the proof holds.
	Note that now one solution of $\qEnum{Q^+}{\Delta^+}$ may correspond to several solutions of $\qEnum{Q}{\Delta^1}$, as some variables were possibly added to the head. However, as the possible values of the added head variables are bounded by CDs, the number of solutions of $Q^+$ that correspond to one solution of $Q$ is bounded by a constant.
	
	We now formally prove that $\qEnum{Q}{\Delta^1}\leq_{e'}\qEnum{Q^+}{\dltplus}$.
	Denote $Q$ by $Q(\vec{p}) \leftarrow R_1(\vec{v}_1), \dots, R_m(\vec{v}_m)$.
	Given an instance $I$ of $\qEnum{Q}{\Delta}$, we define $\sigma(I)$.
	We start by removing tuples that interfere with the extended dependencies. For every dependency $\delta=(R_j\colon X\rightarrow y,c)\in\Delta^1$ and every atom $R_k(\vec{v}_k)$ that contains $X\cup\{y\}$, we correct $R_k$ according to $\delta$:
	we only keep tuples of $R_k^I$ that agree with some tuple of $R_j^I$ over the values of
	$X\cup\{y\}$.
	Next, we follow the extension of the schema and extend the instance accordingly. This phase results in a sequence of instances $I_0,I_1,\ldots,I_n=\sigma(I)$ that correspond to a sequence of queries $Q=Q_0,Q_1,\ldots,Q_n=Q^+$ such that each query is the result of extending an atom or the head of the previous query according to an FD. If in step $i$ the head was extended, we set $I_{i+1}=I_i$. Now assume some relation $R_k$ is extended according to some  CD $(R_j\colon X\rightarrow y,c)$.
	For each tuple $t\in R^{I_i}_k$, if there is no tuple $s\in R^{I_i}_j$ that agrees with $t$ over the values of $X$, then we remove $t$ altogether.
	Otherwise, we consider all values such tuples assign $y$. 
	Denote those values by $a_1,\ldots,a_m$, and note that due to the CD, $m\leq c$.
	We copy $t$ to $R^{I_{i+1}}_k$ $m$ times, each time assigning $y$ with a different value of $a_1,\ldots,a_m$. Given an answer $\mu\in Q^+(\sigma(I))$, we define $\tau(\mu)$ to be the projection of $\mu$ to $\free(Q)$.

	We need to show that $Q(I)=\{\mu|_{\free(Q)}:\mu|_{\free(Q^+)}\in Q^+(\sigma(I))\}$, and that an element of the left-hand side may only appear a constant amount of times on the right-hand side.
	First, if $\mu|_{\free(Q^+)}\in Q^+(\sigma(I))$, since all tuples of $\sigma(I)$ appear (perhaps projected) in $I$, then $\mu|_{\free(Q)}\in Q(I)$.
It is left to show the opposite direction: if $\mu|_{\free(Q)}\in Q(I)$ then $\mu|_{\free(Q^+)}\in Q^+(I^+)$.
We show by induction on $Q=Q_0,Q_1,\ldots,Q_n=Q^+$ that $\mu|_{\free(Q_i)}\in Q_i(I_i)$.
The induction base holds since in the cleaning phase we did not remove ``useful'' tuples.
Since $\mu|_{\free(Q)}\in Q(I)$,
there exist tuples, one of each relation of the query, that agree on the values of $X\cup\{y\}$ (they all assign them with the values $\mu$ assigns them). These tuples were not removed in the cleaning phase, and therefore $\mu|_{\free(Q)}\in Q(I_0)$.
Next assume that $\mu|_{\free(Q_i)}\in Q_i(I_i)$, and we want to show that $\mu|_{\free(Q_{i+1})}\in Q_{i+1}(I_{i+1})$. This claim is trivial in case the head was extended.
Note that there can be at most $c-1$ different answers $\mu'|_{\free(Q_{i+1})}$  in $Q_{i+1}(I_{i+1})$ such that $\mu|_{\free(Q_{i+1})}\neq\mu'|_{\free(Q_{i+1})}$ but $\mu|_{\free(Q_{i})}=\mu'|_{\free(Q_{i})}$, as the added variable $y$ is bound by the CD to have at most $c$ possible values.
Now consider the case where an atom $R_k(\vec{v}_k)$ was extended according to a CD $(R_j\colon X\rightarrow y,c)$.
The tuple $\mu(\vec{v}_k)\in R_k^{I_i}$ was extended with the value $\mu(y)$ due to the tuple $\mu(\vec{v}_j)\in R_j^{I_i}$ that agrees with it on the values of $X$, and so $\mu(\vec{v}_k,y)\in R_k^{I_{i+1}}$. In case of self-joins, other atoms with the relation $R_k$ are extended with a new and distinct variable. Such variables will be mapped to this value $\mu(y)$ as well. Overall, we have that $\mu$ (extended by mappings of the fresh variables) is also a homomorphism in $Q_{i+1}(I_{i+1})$.	
\end{proof}

We can now extend our positive results to accommodate CDs. Let $Q$ be a CD-free-connex CQ over a schema $\calS=(\calR,\Delta)$, where $\Delta$ contains CDs.
\change{
We have $\qEnum{Q}{\Delta}\leq_{e}\qEnum{\SJF(Q)}{\SJF(\Delta)}\leq_{e'}\qEnum{
\SJF(Q)^+}{\SJF(\Delta)_{Q^+}}\leq_e\qEnum{\SJF(Q)^+}{\emptyset}$ according to Lemma~\ref{lemma:posCD},
and $\qEnum{\SJF(Q)^+}{\emptyset}\in\DelayClin$ due to Theorem~\ref{theorem:originalDichotomy}.}
The class $\DelayClin$ is closed under this type of reduction. To avoid printing duplicates, we need to store previous results in a lookup table, and verify that a generated result is new before printing it.
This alone is not enough, as we can have a long sequence of generating known results, and then the delay between generating new results can be larger than constant. For this reason, we use the known technique of delaying the results,
also used by Capelli et. al~\cite[Proposition 12]{CAPELLI2018}.
If every answer to $\qEnum{Q}{\Delta}$ corresponds to at most $c$ answers to $\qEnum{Q^+}{\emptyset}$, we save the newly generated results in a queue, and after generating $c$ results we pop and print a result from the queue. This guarantees that the queue is never empty when accessed, and the results are printed with constant delay. Therefore, $\qEnum{Q}{\Delta}\in\DelayClin$, and we deduce the following:

\begin{theorem}\label{thm:dichotomyCDs}
Let $Q$ be a CD-acyclic CQ over the schema $\calS=(\calR,\Delta)$, where $\Delta$ is a set of CDs.
\begin{itemize}
\item If $Q$ is CD-free-connex, then $\qEnum{Q}{\Delta}\in\DelayClin$.
\item If $Q$ is \change{self-join free} and not CD-free-connex, then $\qEnum{Q}{\Delta}\not\in\DelayClin$,  assuming
 that the product of two $n \times n$ boolean matrices cannot be computed in time $\calO(n^2)$.
\end{itemize}
\end{theorem}

\section{Conjunctive Queries with Disequalities}\label{sec:disequalities}

In this last section, we show that all results in this article apply to CQs with disequalities.
To keep the proof ideas clear, we begin by assuming that the schema only contains FDs. We incorporate our results with CDs at the end of this section.
A {\em CQ with disequalities} over a schema $\calS=(\calR,\Delta)$ is an expression of the form
$$Q(\vec{p}) \leftarrow R_1(\vec{v}_1), \dots, R_m(\vec{v}_m),w_1\neq w_2, \ldots , w_{k-1}\neq w_k$$ 
where $Q(\vec{p}) \leftarrow R_1(\vec{v}_1), \dots, R_m(\vec{v}_m)$ is a CQ denoted by $\rel{Q}$, and $w_1,\ldots,w_k$ are variables in $\var(Q)$.
We denote $\dis{Q}=\{w_1\neq w_2, \ldots , w_{k-1}\neq w_k\}$.
The evaluation is 
$Q(I)=\{\mu |_{\vec{p}}\in \rel{Q}(I) | \forall w_i\neq w_j \in\dis{Q}: \mu(w_i)\neq \mu(w_j)\}$.

The structural definitions of CQs with disequalities depend only on their bases. That is, $Q$ is said to be {\em acyclic} if $\rel{Q}$ is acyclic, and we similarly define {\em acyclic} and {\em free-connex} CQs with disequalities. 
The FD-extension of a CQ with disequalities is denoted by $Q^+$. The base extends as before, and the disequalities remain the same. That is, $\rel{(Q^+)}=\rel{Q^+}$ is the FD-extension of ${\rel{Q}}$, and $\dis{Q^+}=\dis{Q}$. We say that $Q$ is {\em FD-acyclic} if $\rel{Q}$ is FD-acyclic, and similarly define {\em FD-free-connex} and {\em FD-cyclic} CQs with disequalities.

We first show that our positive results apply also to CQs with disequalities. The following lemma is the equivalent of 
the positive case of Theorem~\ref{thm:qpEquiv}. In the proof, the reduction from the query to its extension works as before for the base query, and the disequalities are satisfied as they remain the same during the extension.

\begin{lemma}\label{lemma:disPos}
	Let $Q$ be a \change{self-join free} CQ with disequalities over a schema $\calS=(\calR,\Delta)$, and let $Q^+$ be its FD-extension.
	Then $\qEnum{Q}{\Delta}\leq_e\qEnum{Q^+}{\dltplus}$.
\end{lemma}

\begin{proof}
	Recall that $\rel{Q^+}$ is the FD-extension of $\rel{Q}$.
	According to the proof of Theorem~\ref{thm:qpEquiv}, we can show that $\qEnum{\rel{Q}}{\Delta}\leq_e\qEnum{\rel{Q^+}}{\dltplus}$ by using a construction $\sigma$ such that $\mu|_{\free(Q)}\in \rel{Q}(I)$ iff $\mu|_{\free(Q^+)}\in \rel{Q}^+(\sigma(I))$.
	We use this same construction to show that $\qEnum{Q}{\Delta}\leq_e\qEnum{Q^+}{\dltplus}$.
	By definition we have that $\mu|_{\free(Q)}\in Q(I)$ iff $\mu|_{\free(Q)}\in \rel{Q}(I)$ and in addition $\forall u\neq w \in \dis{Q}: \mu(u)\neq\mu(w)$.
	As in the proof of Theorem~\ref{thm:qpEquiv}, and since $\dis{Q} = \dis{Q^+}$ , this is true iff $\mu|_{\free(Q^+)}\in \rel{Q}^+(\sigma(I))$ and $\forall u\neq w \in \dis{Q^+}: \mu(u)\neq\mu(w)$.
	This is equivalent to $\mu|_{\free(Q^+)}\in Q^+(\sigma(I))$.
	We conclude that $\mu|_{\free(Q)}\in Q(I)$ iff $\mu|_{\free(Q^+)}\in Q^+(\sigma(I))$.
\end{proof}

Combining Lemma~\ref{lemma:disPos} with the full version of Theorem~\ref{theorem:originalDichotomy} that holds for CQs with disequalities~\cite{bdg:dichotomy}, we have that if $Q$ is an FD-free-connex CQ with disequalities over a schema $\calS=(\calR,\Delta)$, then $\qEnum{Q}{\Delta}\in\DelayClin$.

We now discuss the lower bound. We formalize an idea proposed by Bagan et al. to extend their negative results to CQs with disequalities, and show that this idea can be applied to reductions in general, and to our case in particular. 

\begin{lemma}\label{lemma:disHardReduction}
	Let $Q$ be a CQ with disequalities over a schema $\calS=(\calR, \Delta)$.
	Then $\qEnum{\rel{Q}}{\Delta}\leq_e\qEnum{Q}{\Delta}$.
\end{lemma}

\begin{proof}
	Given an instance $I$ of $\qEnum{\rel{Q}}{\Delta}$, we construct $\sigma(I)$ by assigning every variable a disjoint domain. Formally, for every atom $R(\vec{v})$ of $\rel{Q}$ and every tuple $(c_1,\ldots,c_t)\in R^I$, we have the tuple $((c_1,\vec{v}[1]),\ldots,(c_t,\vec{v}[t]))$ in $R^{\sigma(I)}$. 
	We claim that the answers of $\rel{Q}$ over $I$ are exactly those of $Q$ over $\sigma(I)$ if we omit the variable names.
	That is, we define $\tau : \idom\times\var(Q) \rightarrow \idom$ as $\tau((c,v))=c$, and show $\rel{Q}(I)=\tau(Q(\sigma(I)))$.
	
	Let $\mu |_{\free(Q)}\in \rel{Q}(I)$. Then we know that for every atom $R(\vec{v})$ of $\rel{Q}$, there exists a tuple 	$(\mu(\vec{v}[1]),\ldots,\mu(\vec{v}[t]))\in R_I$. By definition of $\sigma$, it results in $((\mu(\vec{v}[1]),\vec{v}[1]),\ldots,(\mu(\vec{v}[t]),\vec{v}[t]))\in R^{\sigma(I)}$.
	Therefore, as we define the mapping $f_\mu: \var(Q) \rightarrow\idom\times\var(Q)$ to be $f_\mu(u)=(\mu(u),u)$, we have that $f_\mu|_{\free(Q)}\in \rel{Q}(\sigma(I))$. 
	Note that for all $u,w\in\var(Q)$ with $u\neq w$ we have $f_\mu(u)\neq f_\mu(w)$. Therefore, $f_\mu\in Q(\sigma(I))$ as all disequalities in $\dis{Q}$ are satisfied. Since $\tau\circ f_\mu = \mu$, we have that $\mu |_{\free(Q)}\in \tau(Q(\sigma(I)))$, and this concludes that $\rel{Q}(I)\subseteq\tau(Q(\sigma(I)))$.
	The opposite direction holds as well.
	If $\nu |_{\free(Q)}\in Q(\sigma(I))$, then for every atom $R(\vec{v})$ in $Q$ we have that $\nu(\vec{v})\in R^{\sigma(I)}$. By construction, $\tau(\nu(\vec{v}))\in R^{I}$, and therefore $\tau\circ\nu|_{\free(Q)}\in \rel{Q}(I)$.
\end{proof}

By combining Theorem~\ref{thm:qpEquiv} with Lemma~\ref{lemma:disHardReduction}, we get the following reduction.

\begin{lemma}\label{lemma:disNeg}
	Let $Q$ be a \change{self-join free} CQ with disequalities over a schema $\calS=(\calR, \Delta)$.
	Then $\qEnum{\rel{Q}^+}{\dltplus}\leq_e\qEnum{Q}{\Delta}$.
\end{lemma}

Lemma~\ref{lemma:disNeg} means that the hardness results of this paper extend to CQs with disequalities. In particular, if a self-join free $Q$ is FD-acyclic but not FD-free-connex, then $\qEnum{\Pi}{\emptyset}\leq_e\qEnum{\rel{Q}^+}{\dltplus}\leq_e \qEnum{Q}{\Delta}$ by Lemma~\ref{lemma:acyclicgeneralfdnegative} and Lemma~\ref{lemma:disNeg}. We get that then $\qEnum{Q}{\Delta}\not\in\DelayClin$, assuming that the product of two $n \times n$ boolean matrices cannot be computed in time $\calO(n^2)$.
Similarly, we can use Lemma~\ref{lemma:simple-pm} and Lemma~\ref{lemma:disNeg} to show that if a self-join free $Q$ is FD-cyclic, and the schema only contains unary FDs, then $\qDecide{Q}{\Delta}$ cannot be solved in linear time, assuming that the $\tetra{k}$ problem is not solvable in linear time for any $k$.

We next consider CQs with disequalities in the presence of CDs.
The ideas presented in Section~\ref{section:cardinalitydep} and Section~\ref{sec:disequalities} can be combined to show the following lemma.

\begin{lemma}\label{lemma:extensions}
	Let $Q$ be a \change{self-join free} CQ with disequalities over a schema $\calS=(\calR,\Delta)$, where $\Delta$ is a set of CDs.
	 We have:
	\begin{itemize}
		\item $\qEnum{Q}{\Delta}\leq_{e'}\qEnum{Q^+}{\dltplus}$
		\item $\qEnum{\rel{Q}^+}{\dltplus^{\text{FD}}}\leq_e\qEnum{Q}{\Delta}$
	\end{itemize}
\end{lemma}

The proof for $\qEnum{Q}{\Delta}\leq_{e'}\qEnum{Q^+}{\dltplus}$ works similarly to that of Lemma~\ref{lemma:disPos}, except it builds upon Lemma~\ref{lemma:posCD} instead of Theorem~\ref{thm:qpEquiv}. We use the same reduction as in Lemma~\ref{lemma:posCD}, and the disequalities remain unchanged.
By combining Lemma~\ref{lemma:disHardReduction} with Lemma~\ref{lemma:negCD}, he have that by transitivity $\qEnum{\rel{Q}^+}{\dltplus^{\text{FD}}}\leq_e\qEnum{\rel{Q}}{\Delta}\leq_e\qEnum{Q}{\Delta}$.

Lemma~\ref{lemma:extensions} along with Lemma~\ref{lemma:acyclicgeneralfdnegative}, Lemma~\ref{lemma:simple-pm} and the positive results by Bagan et al.~\cite[Theorem 13, Theorem 17]{bdg:dichotomy} prove that all results presented in this article apply to CQs with disequalities over schemas with cardinality dependencies.
\change{To prove the positive results, we need to go as before through the self-join free version of the query: $\qEnum{Q}{\Delta}\leq_{e}\qEnum{\SJF(Q)}{\SJF(\Delta)}\leq_{e'}\qEnum{
\SJF(Q)^+}{\SJF(\Delta)_{Q^+}}\leq_e\qEnum{\SJF(Q)^+}{\emptyset}$.}
The following theorem summarizes all classification results presented here, as FDs are a special case of CDs, and CQs are a special case of CQs with disequalities.
\begin{theorem}
	Let $Q$ be a CQ with disequalities over a schema $\calS=(\calR,\Delta)$, where $\Delta$ is a set of CDs.
	\begin{itemize}
		\item If $Q$ is CD-free-connex, then $\qEnum{Q}{\Delta}\in\DelayClin$.
		\item If $Q$ is CD-acyclic, then $\qEnum{Q}{\Delta}\in\DelayLin$.
		\item If $Q$ is self-join free, CD-acyclic but not CD-free-connex, then $\qEnum{Q}{\Delta}\not\in\DelayClin$, assuming
		that the product of two $n \times n$ boolean matrices cannot be computed in time $\calO(n^2)$.
		\item If $Q$ is self-join free, CD-cyclic and $\Delta$ contains only unary CDs, then $\qDecide{Q}{\Delta}$ cannot be solved in linear time, assuming that the $\tetra{k}$ problem is not solvable in linear time for any $k$. In particular, $\qEnum{Q}{\Delta}\not\in\DelayClin$ and $\qEnum{Q}{\Delta}\not\in\DelayLin$.
	\end{itemize}
\end{theorem}

\section{Concluding Remarks}\label{section:conclusion}

Previous hardness results regarding the enumeration complexity of CQs no longer hold in the presence of dependencies.
In this paper, we have shown that some of the queries which where previously classified as hard are in fact tractable in the presence of FDs, and that the others remain intractable.
We have classified the enumeration complexity of self-join free CQs according to their FD-extension. Under previously used complexity assumptions: a query is in $\DelayClin$ if its extension is free-connex, it is not in $\DelayClin$ if its extension is acyclic but not free-connex, and it is not even decidable in linear time if the schema has only unary FDs and its extension is cyclic.
We also show that these results apply for CQs with disequalities and in the presence of CDs.
In addition to our results on constant delay enumeration, we show that this work has consequences in other enumeration classes such as $\DelayLin$.

This work opens up quite a few directions for future work.
Our proof for the hardness of FD-cyclic CQs assumes that all FDs are unary. The question of whether this result holds for general FDs, along with the classification of Example~\ref{exp:generalCyclic}, remains open.
In addition, to show that enumerating CD-free-connex CQs can be done in $\DelayClin$, we store all printed results. The required space has the size of the output, which may be polynomial in that of the input
in data complexity and exponential in combined complexity.
It is unclear whether there exists a solution that requires less space.
Finally, we wish to explore how the tools provided here can be used to extend other known results on query enumeration, such as a dichotomy for enumerating CQs with negation~\cite{bb:thesis}, to accommodate FDs.

\bibliographystyle{spmpsci}      
\bibliography{references}

\end{document}